\crefname{secinapp}{appendix}{appendices}
\Crefname{secinapp}{Appendix}{Appendices}
\newtheorem{definition}{Definition}
\newtheorem{lemma}{Lemma}
\newtheorem{theorem}{Theorem}
\newtheorem*{remark}{Remark}
\newtheorem{corollary}{Corollary}
\newcommand{\size}[1]{\left\lvert #1 \right\rvert}
\newcommand{\graph}[2]{\mathbb{#1}_{#2}}
\newcommand{\intset}[2]{\left \llbracket #1 , #2 \right \rrbracket}
\newcommand{\expected}[2]{\left\langle #1 \right\rangle_{#2}}
\begin{document}

\title{The Fundamental Gap for a Class of Schr\"{o}dinger Operators on Path and Hypercube Graphs}
\author{Michael Jarret}
\affiliation{Department of Physics, University of Maryland, College Park, MD 20742-4111}
\email{mjarret@umd.edu}
\author{Stephen P. Jordan}
\affiliation{Applied and Computational Mathematics Division, National Institute of Standards and Technology,  Gaithersburg,  MD 20899}
\email{stephen.jordan@nist.gov}

\begin{abstract}
  We consider the difference between the two lowest eigenvalues (the fundamental gap) of a Schr\"{o}dinger operator acting on a class of graphs. In particular, we derive tight bounds for the gap of Schr\"{o}dinger operators with convex potentials acting on the path graph. Additionally, for the hypercube graph, we derive a tight bound for the gap of Schr\"{o}dinger operators with convex potentials dependent only upon vertex Hamming weight. Our proof makes use of tools from the literature of the fundamental gap theorem as proved in the continuum combined with techniques unique to the discrete case. We prove the tight bound for the hypercube graph as a corollary to our path graph results. 
\end{abstract}

\maketitle

\section{Introduction}
  The Fundamental Gap Conjecture proposed a tight lower bound of $3\pi^2/D^2$ to the difference between the two lowest eigenvalues (the gap) of a Schr\"{o}dinger operator $-\nabla^2 + V(x)$ with convex potential $V$ on a compact convex domain $\Omega \subset \mathbb{R}^n$ of diameter $D$ and subject to Dirichlet boundary conditions. Recently, Andrews and Clutterbuck proved the conjecture for all ``semiconvex'' potentials (which include convex potentials as a special case) in arbitrary dimensions \cite{andrews2011proof}. Although the community's focus has largely centered on the continuum\cite{andrews2011proof, Lavine1994, ashbaugh1989optimal, Yu1986}, as early as 1990 Ashbaugh and Benguria saw the potential for extending their results to discrete Laplacians. In their work, they proved a lower bound to the gap for a particular class of discrete Laplacians with symmetric-decreasing potentials \cite{ashbaugh1990some}. Indeed, recent interest in adiabatic quantum computing justifies their vision and motivates our interest in lifting continuum results to graph Laplacians\cite{Farhi_science, FGG02}. While our interest is driven by quantum computation, the discrete eigenvalue gap is also of interest to condensed matter physicists. Abstractly, this result is a useful addition to spectral theory.

  Previously, in the setting of quantum computation, gap bounds were derived on an as-needed basis. For instance, in an analysis of the power of adiabatic algorithms, van Dam et al. bounded eigenvalue gaps in the minimum Hamming weight problem by considering an explicit gap and then bounding the maximum error on this gap from perturbations\cite{DMV01}. In another instance, Reichardt considers the eigenvalue gap for an Ising system by using properties of the operator's principal submatrices\cite{R04}. (At least in the case of the path graph, Reichardt's Sturm sequences are similar in form to our eigenvector recurrence of \cref{eqn:recurrence}. For an explicit examination of the link between principal submatrices and the eigenvector recurrences, see Gantmakher and Kre\u{i}n\cite{gantmakher2002oscillation}.) Unlike the constructions above, we look to develop tools of increasingly general applicability. Thus, we begin with systems where gaps are demonstrably ``large'' and search for extensions of these systems to problems of algorithmic and physical interest.

  In this work, we consider specifically Schr\"{o}dinger operators corresponding to graph Laplacians with suitably defined convex potential terms. Here, the potential is restricted to the vertices and can be seen either as a site-dependent physical potential (as in the physics literature) or as a weighted graph with loops (as in the mathematical and computer science literature). Thus, for a graph $\mathbb{G}=(V,E)$ with graph Laplacian $\mathbf{L(\mathbb{G})}$ and subjected to a potential $W(\cdot)$ we consider Schr\"{o}dinger operators of the form
  \begin{equation}
    \mathbf{H}_{W}(\mathbb{G}) = \mathbf{L(\mathbb{G})} + \mathbf{W}
  \end{equation}
  where
  \begin{equation}
    \big[\mathbf{W}(V)\big]_{ij} = W(V_i) \delta_{ij}.
  \end{equation}
  Although our problem is analogous to the Fundamental Gap Conjecture as proven in the continuum, lifting existing results to the discrete realm and maintaining tight bounds is non-trivial. Perhaps the most obvious challenge we face is the loss of well-defined boundary conditions and, for this reason, we restrict our initial study to the path and hypercube graphs. In the first case, our restriction gives boundary conditions similar to Neumann boundary conditions in the continuum and thus our result bears some resemblance to the continuum one of Payne and Weinberger \cite{Payne1960} and indeed converges upon this result asymptotically. (For the physicist, our path graph Hamiltonian can be viewed as a 1-dimensional chain with a nearest-neighbor interaction term and a convex, site-dependent potential term. See \Cref{fig:path}. Up to an identity term, the Laplacian of the hypercube graph of $2^N$ vertices,  $\mathbb{H}_{2^N}$, is equivalent to a sum of the Pauli $\sigma_x$ operators acting on each of $N$ qubits. In particular, transverse Ising models such as those studied in \cite{Farhi_science} can be cast as potentials on the hypercube. Here, like Reichardt\cite{R04} and van Dam et al. \cite{DMV01}, but unlike Farhi et al.\cite{Farhi_science}, we focus on the case that the potential depends only on the Hamming distance from a minimum. For the hypercube graph see \Cref{fig:path}.)
  
  \begin{figure}[htp]
    \includegraphics[width=0.78\textwidth]{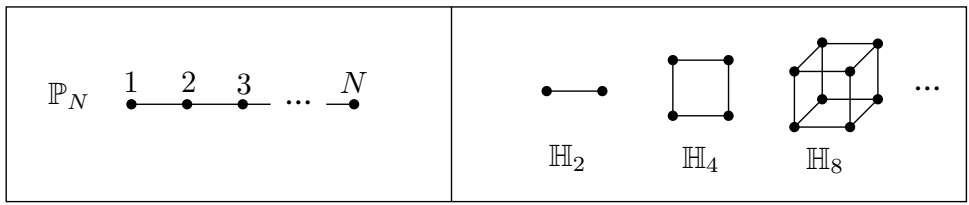}
    \caption{\label{fig:path} The path graph $\graph{P}{N}$ of length $N$ and the first three hypercube graphs $\graph{H}{2}$,$\graph{H}{4}$, and $\graph{H}{8}$.}
  \end{figure}

  In particular, we show that for convex potentials on the path graph $\mathbb{P}_N$ of length $N$ the gap $\Gamma$ is bounded by the gap corresponding to the flat potential
  \begin{equation}
    \Gamma \geq 2\left(1-\cos\left(\frac{\pi}{N}\right)\right).
  \end{equation}
  On the hypercube graph $\graph{H}{2^N}$, for convex potentials dependent only upon vertex Hamming weight, we prove a similar flat-potential lower bound given by
    \begin{equation}
    \Gamma \geq 2.
  \end{equation}

\section{Preliminaries}\label{sec:prelim}
  \subsection{The graph Laplacian and its Eigenvalues}\label{sec:prelim_eigenvalues}
    Let $\mathbb{G}=(V,E)$ be an undirected graph with vertex set $V$ and edge set $E\subseteq V \times V$. Then we associate with $\mathbb{G}$ a degree matrix $\mathbf{D}(\mathbb{G})$ and an adjacency matrix $\mathbf{A}(\mathbb{G})$ where
    \begin{equation}
      \big[\mathbf{D}(\mathbb{G})\big]_{ij} = d_i \delta_{ij}
    \end{equation}
    with $d_i$ the degree of vertex $V_i \in V$ and
    \begin{equation}
      \big[\mathbf{A}(\mathbb{G})\big]_{ij} = \begin{cases}
						1 & \text{if $(V_i,V_j) \in E$} \\
						0 & \text{otherwise}.
					      \end{cases}
    \end{equation}

    One then defines the $\lvert V \rvert \times \lvert V \rvert$ graph Laplacian $\mathbf{L}(\mathbb{G})$ as the difference between the degree matrix and adjacency matrix. That is,
    \begin{equation}
      \mathbf{L}(\mathbb{G}) = \mathbf{D}(\mathbb{G}) - \mathbf{A}(\mathbb{G}).
    \end{equation}

    We now extend our attention to a more general class of Schr\"{o}dinger operators of the form
    \begin{equation}
      \mathbf{H}_W(\mathbb{G}) \stackrel{\text{def}}{=} \mathbf{L}(\mathbb{G}) + \mathbf{W}(V) \label{eqn:hw}
    \end{equation}
    where for some function $W:V\rightarrow\mathbb{R}$, $\mathbf{W}$ is the diagonal matrix defined by
    \begin{equation}
      \big[\mathbf{W}(V)\big]_{ij} \stackrel{\text{def}}{=} W(V_i) \delta_{ij}.
    \end{equation}
    We can think of the resulting matrix as either the graph Laplacian for a weighted graph with loops or as a Schr\"{o}dinger operator (Hamiltonian) with an external potential. The eigenvalue spectrum of $\mathbf{H}_W(\mathbb{G})$ is $\lambda_1 \leq \lambda_2 \leq \dots \leq \lambda_{\size{V}}$ with associated, normalized eigenvectors $\mathbf{u}(\lambda_1),\mathbf{u}(\lambda_2),\dots,\mathbf{u}(\lambda_{\size{V}})$. Suppose now that we consider the one parameter family $\mathbf{H}_W(\mathbb{G};\alpha)$ with
    \begin{equation}
      \mathbf{H}_W(\mathbb{G}) = \mathbf{H}_W(\mathbb{G};\alpha) \Big\rvert_{\alpha=0}.
    \end{equation}
    If $\lambda_k$ is an eigenvalue of $\mathbf{H}_W(\mathbb{G};\alpha)$ with no degeneracy, the Hellman-Feynman theorem governs the relationship between $\lambda_k$ and $\alpha$. That is,
    \begin{theorem}[Hellman-Feynman]\label{thm:Hellman-Feynman}
      Let $\mathbf{H(\alpha)}$ be a Hermitian operator (matrix) dependent upon a parameter $\alpha$ with non-degenerate eigenvalue $\lambda(\alpha)$ and associated eigenvector $\mathbf{u}(\lambda;\alpha)$. Then
    \begin{equation}
      \frac{d \lambda(\alpha)}{d \alpha}  = \sum_{i,j} u_i^*(\lambda;\alpha) \frac{d \big[\mathbf{H(\alpha)}\big]_{ij}}{d \alpha} u_j(\lambda;\alpha) \equiv \expected{\frac{d \big[\mathbf{H(\alpha)}\big]_{ij}}{d \alpha}}{\mathbf{u}(\lambda;\alpha)}
    \end{equation}
    where $u_i(\lambda;\alpha)$ is the $i^{th}$ component of $\mathbf{u}(\lambda;\alpha)$. \footnote{It should be noted that the theorem is typically stated for a Hermitian operator $\mathbf{H}_\alpha$ with eigenvalues $\lambda_0 < \lambda_1 < \dots < \lambda_N$. Care must be taken in the application of this theorem when considering degenerate eigenvalues \cite{Vatsya2004,Zhang2004} which can occur in broader classes of graph Laplacians. For instance, the ring graph with constant potential has degeneracies. Nonetheless, since the cases we consider in this paper are non-degenerate, we can use this theorem in its above-stated form.}
    \end{theorem}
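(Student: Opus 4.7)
The plan is to derive the formula directly from the eigenvalue equation $\mathbf{H}(\alpha)\,\mathbf{u}(\lambda;\alpha) = \lambda(\alpha)\,\mathbf{u}(\lambda;\alpha)$ by exploiting the normalization of the eigenvector to eliminate the contributions coming from derivatives of $\mathbf{u}$. First, I would take the inner product of both sides with $\mathbf{u}(\lambda;\alpha)$ and use $\expected{\mathbf{u}}{\mathbf{u}}=1$ (which is implicit in the word ``normalized'') to write the clean identity
\begin{equation}
  \lambda(\alpha) = \sum_{i,j} u_i^*(\lambda;\alpha)\,\big[\mathbf{H}(\alpha)\big]_{ij}\,u_j(\lambda;\alpha).
\end{equation}
Differentiating this scalar equation with respect to $\alpha$ produces three terms: one with $d\mathbf{H}/d\alpha$ sandwiched between $\mathbf{u}$ and $\mathbf{u}^*$, and two terms involving derivatives of $\mathbf{u}$ and $\mathbf{u}^*$.

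Next I would show that the two ``eigenvector-derivative'' terms cancel. Using the Hermiticity of $\mathbf{H}(\alpha)$ together with the eigenvalue equation and its conjugate, each of these terms reduces to $\lambda(\alpha)$ times an inner product of $\mathbf{u}$ with its $\alpha$-derivative. Their sum is $\lambda(\alpha)\,\frac{d}{d\alpha}\expected{\mathbf{u}}{\mathbf{u}} = \lambda(\alpha)\cdot\frac{d}{d\alpha}(1) = 0$. What remains is exactly the claimed expression
\begin{equation}
  \frac{d\lambda(\alpha)}{d\alpha} = \sum_{i,j} u_i^*(\lambda;\alpha)\,\frac{d\big[\mathbf{H}(\alpha)\big]_{ij}}{d\alpha}\,u_j(\lambda;\alpha).
\end{equation}

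The main obstacle, and the place where the non-degeneracy hypothesis is doing real work, is justifying that $\lambda(\alpha)$ and $\mathbf{u}(\lambda;\alpha)$ are actually differentiable functions of $\alpha$ in a neighborhood of the point in question; without this the chain rule used above is meaningless. For a Hermitian matrix whose entries depend smoothly on $\alpha$, an isolated (non-degenerate) eigenvalue is a simple root of the characteristic polynomial, so the implicit function theorem gives local smoothness of $\lambda(\alpha)$. Smoothness of a corresponding choice of normalized eigenvector follows from standard analytic perturbation theory (for example, Kato's selection theorem) applied to the rank-one spectral projector, which varies smoothly precisely because of the spectral gap separating $\lambda(\alpha)$ from the rest of the spectrum. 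The footnote in the statement already flags this caveat, so I would record it briefly and then carry out the two-line computation above to finish the proof.
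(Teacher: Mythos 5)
Your proof is correct: it is the standard derivation of the Hellmann--Feynman theorem, in which one differentiates $\lambda(\alpha)=\sum_{i,j}u_i^*[\mathbf{H}(\alpha)]_{ij}u_j$, uses Hermiticity and the eigenvalue equation to reduce the two eigenvector-derivative terms to $\lambda(\alpha)\frac{d}{d\alpha}\expected{\mathbf{u}}{\mathbf{u}}=0$, and correctly identifies non-degeneracy as the hypothesis guaranteeing differentiability of $\lambda$ and of a normalized eigenvector. The paper itself states this theorem without proof (treating it as a known result and only flagging the degeneracy caveat in a footnote), so your argument supplies exactly the standard justification the paper implicitly relies on.
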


    Our primary interest in this paper is the so-called Fundamental Gap,
    \begin{equation}
      \Gamma(\alpha) \stackrel{\text{def}}{=} \lambda_2(\alpha) - \lambda_1(\alpha)
    \end{equation}
    the difference between the two lowest eigenvalues of $\mathbf{H}_W(\mathbb{G};\alpha)$. Assuming that both $\lambda_1$ and $\lambda_2$ are non-degenerate eigenvalues, by \Cref{thm:Hellman-Feynman} we have that
    \begin{equation}\label{eqn:Hellman-Feynman}
      \frac{d \Gamma(\alpha)}{d \alpha} = \expected{\frac{d \mathbf{H}_W(\mathbb{G};\alpha)}{d \alpha}}{\mathbf{u}(\lambda_2)} - \; \expected{\frac{d \mathbf{H}_W(\mathbb{G};\alpha)}{d \alpha}}{\mathbf{u}(\lambda_1)}
    \end{equation}
    where if we consider $\mathbf{H}_W(\mathbb{G};\alpha) = \mathbf{H}_{\alpha W}(\mathbb{G})$,
    \begin{equation}
      \frac{d \Gamma(\alpha)}{d \alpha} = \expected{\mathbf{W}}{\mathbf{u}(\lambda_2)} - \expected{\mathbf{W}}{\mathbf{u}(\lambda_1)}.
    \end{equation}

  \subsection{Eigenvectors of $\mathbf{H}_W(\mathbb{G})$}
    In deriving bounds for $\Gamma$ we make extensive use of the recurrence relations satisfied by the eigenvectors of $\mathbf{H}_W(\mathbb{G})$. Expressing the eigenvalue equation
    \begin{equation}
      \mathbf{H}_W(\mathbb{G}) \mathbf{u}(\lambda) - \lambda \mathbf{u}(\lambda)=0
    \end{equation}
    componentwise, we obtain the following set of linear equations.
    \begin{equation}\label{eqn:recurrence_graph}
      (d_i + W_i - \lambda)u_i(\lambda) = \sum_{(V_i,V_j)\in E} u_j(\lambda) \;\; \text{for $V_i \in V$}
    \end{equation}
    where for simplicity we let $W_i = W(V_i)$.

    When $\mathbb{G}$ is the path graph, we always consider the labeling of $V$ such that $(V_i,V_j) \in E \implies j = i\pm1$. Then, \cref{eqn:recurrence_graph} reduces to
    \begin{equation}\label{eqn:recurrence}
      (2 + W_i - \lambda)u_i(\lambda) = u_{i-1}(\lambda) + u_{i+1}(\lambda) \;\; \text{for $V_i \in V$.}
    \end{equation}
    Here, to simplify the treatment, we introduce fictitious vertices $u_0(\lambda)$ and $u_{\size{V}+1}(\lambda)$. We correspondingly set $u_0(\lambda)=u_1(\lambda)$ and $u_{\size{V}+1}(\lambda)=u_{\size{V}}(\lambda)$ for the path graph.

    For our purposes, it is often convenient to express \cref{eqn:recurrence} in terms of difference equations. For this, we need the forward difference operator.
    \begin{definition}[Forward Difference Operator]\label{def:forward_difference}
      For a given sequence $\left(u_i\right)$, we define $\Delta$, the forward difference operator, by $\Delta u_i = u_{i+1} - u_i$. We further define $\Delta^2$, the second difference operator, by $\Delta^2 u_i = u_{i+1} - 2 u_i + u_{i-1}$.
    \end{definition}
       It is also useful to note that for any sequence $(u_i)$,
      \begin{equation}\label{eqn:forward_sum}
	\sum_{i=a}^b \Delta u_i = u_{b+1} - u_a.
      \end{equation}

    \begin{remark}
      The reader should note that our notation yields $\Delta \left(\Delta u_i \right) \neq \Delta^2 u_i$. This makes $\Delta^2$ a central difference operator, not a forward difference operator. This choice is convenient, since it allows us to easily keep track of indices as seen below in \cref{eqn:recurrence_difference}.
    \end{remark}

    Now, applying \Cref{def:forward_difference}, \cref{eqn:recurrence} becomes
    \begin{equation}\label{eqn:recurrence_difference}
      \Delta^2 u_i(\lambda) = (W_i - \lambda) u_i(\lambda)
    \end{equation}
    which, similar to the second derivative of a continuous function, is an expression of the convexity of $\mathbf{u}$ at $u_i$.

    We now define some other useful properties of sequences, which we will apply to both sequences and vectors without restatement.
    \begin{definition}[Generalized Zero]\label{def:generalized_zero}
      For a given sequence $\left(u_i\right)$ we call $u_m \in \left(u_i\right)$ a generalized zero if $u_m u_{m+1} < 0$ or $u_m = 0$.
    \end{definition}
    \begin{definition}
      For a given sequence $(u_i)$ we call the piecewise linear curve connecting Cartesian coordinates $(i,u_i)$ the $\mathbf{u}$-line.
    \end{definition}
    \begin{definition}\label{def:node}
      For a given sequence $(u_i)$ we call a point at which the $\mathbf{u}-line$ intersects zero a node and label it by its $x$-coordinate. From \Cref{def:generalized_zero} if $u_m \in (u_i)$ is a generalized zero, then the $\mathbf{u}$-line has a node at $x$ with $x\in[m,m+1)$.
    \end{definition}

    For two sequences $(u_i),(v_i)$ we will frequently need the discrete analogue of the Wronskian, the Casoratian sequence $(w_i)$. Suppose that $\mathbf{u}(\mu;\beta),\mathbf{u}(\lambda;\alpha)$ are two sequences (vectors) with $\mu > \lambda$, satisfying \cref{eqn:recurrence}, and parameterized by $\beta$ and $\alpha$ respectively. Then, we are interested in
    \begin{equation}
      w_i\big(\mathbf{u}(\mu;\beta),\mathbf{u}(\lambda;\alpha)\big) = u_{i+1}(\mu;\beta)u_i(\lambda;\alpha) - u_i(\mu;\beta)u_{i+1}(\lambda;\alpha)
    \end{equation}
    which, when applied to \cref{eqn:recurrence} yields
    \begin{equation}\label{eqn:Casoratian}
      \Delta w_{i-1}\big(\mathbf{u}(\mu;\beta),\mathbf{u}(\lambda;\alpha)\big) = \Theta_{W,i}(\mu-\lambda;\beta,\alpha) u_{i}(\mu;\beta)u_i(\lambda;\alpha)
    \end{equation}
    where
    \begin{equation}\label{eqn:theta}
      \Theta_{W,i}(\gamma;\beta,\alpha) \stackrel{\text{def}}{=} W_i(\beta) - W_i(\alpha) - \gamma.
    \end{equation}

\section{The Path Graph $\mathbb{P}_{N}$}\label{sec:path}

  For the path graph $\mathbb{P}_N$ depicted in \Cref{fig:path}, we are interested in the case of convex potentials, for which we offer the following definition:
    \begin{definition}\label{def:convex}
      Let $\intset{a}{b} = \{a,a+1,\dots,b-1,b\}$. Let $\mathbb{P}_N$ be the path graph with vertex set $V=\{V_i\}_{i\in \intset{1}{N}}$ and edge set $E=\big\{(V_i,V_{i+1})\big\}_{i\in\intset{1}{N-1}}$. Let $\mathcal{W}$ be the set of all convex functions $w:\mathbb{R}\rightarrow\mathbb{R}$. We call $W:V\rightarrow \mathbb{R}$ convex if there exists some $w \in \mathcal{W}$ such that $W(V_i) = w(i) \; \forall \; V_i \in V$.

      We similarly define the term ``linear'' and denote its set $\mathcal{L}$.
    \end{definition}

  We begin by using variational arguments to demonstrate that the gap corresponding to each $W\in\mathcal{W}$ is bounded from below by the gap corresponding to some $L\in\mathcal{L}$. This approach is modeled on that used by Lavine in the continuum.\cite{Lavine1994} Then, we use the geometry of the eigenvectors of $\mathbf{H}_L(\mathbb{P}_N)$ to demonstrate that the gap of each linear potential $L$ is bounded from below by the gap for a constant potential.

  \subsection{The gap for convex potentials is lower bounded by the gap for linear potentials.}
  The eigenvalues of $\mathbf{H}_W(\mathbb{P}_N)$ are real and ordered $\lambda_1 < \lambda_2 < \dots < \lambda_N$. Also, recall that we have introduced fictitious points $u_0(\lambda)$ and $u_{N+1}(\lambda)$ to satisfy the recurrence \cref{eqn:recurrence}. Then we have the following fact about the intersections of the $\mathbf{u}(\lambda_1)$-line and $\mathbf{u}(\lambda_2)$-line.
  \begin{lemma}\label{lem:intersection}
    Let $0\leq\lambda_1<\lambda_2$ be the two lowest eigenvalues of $\mathbf{H}_{W}(\mathbb{P}_N)$ for convex $W$, and let $\mathbf{u}(\lambda_1),\mathbf{u}(\lambda_2)$ be their corresponding eigenvectors. Then, $\exists m<n \in \intset{1}{N}$ such that $u_{i}^2(\lambda_2) - u_{i}^2(\lambda_1) \geq 0$ for all $i \in \intset{1}{m}\cup\intset{n+1}{N}$ and $u_{i}^2(\lambda_2) - u_{i}^2(\lambda_1) < 0$ for all $i \in \intset{m+1}{n}$.
  \end{lemma}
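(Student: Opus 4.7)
My plan is to route the sign structure of $g_i = u_i^2(\lambda_2) - u_i^2(\lambda_1)$ through the ratio $r_i = u_i(\lambda_2)/u_i(\lambda_1)$: I will show that $r_i$ is monotone non-increasing by a Casoratian calculation, after which the three-segment pattern follows by reading off where $|r_i| \ge 1$ versus $|r_i| < 1$.

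First I would record some basic spectral data. A Perron--Frobenius argument applied to $cI - \mathbf{H}_W(\mathbb{P}_N)$ for any $c$ large enough that the matrix is nonnegative and irreducible shows $\mathbf{u}(\lambda_1)$ is nondegenerate with strictly positive entries, so $r_i$ is well-defined. A discrete Sturm oscillation argument then forces $\mathbf{u}(\lambda_2)$ to have exactly one generalized zero in the sense of \Cref{def:generalized_zero}, so after fixing a global sign there exists $k \in \intset{1}{N-1}$ with $u_i(\lambda_2) > 0$ on $\intset{1}{k}$ and $u_i(\lambda_2) \le 0$ on $\intset{k+1}{N}$.

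Next I would analyze the Casoratian $w_i = w_i(\mathbf{u}(\lambda_2), \mathbf{u}(\lambda_1))$. The fictitious-vertex identifications $u_0(\lambda) = u_1(\lambda)$ and $u_{N+1}(\lambda) = u_N(\lambda)$ immediately give $w_0 = w_N = 0$. Specializing \cref{eqn:Casoratian} to a single potential (so $\Theta_{W,i} = -(\lambda_2 - \lambda_1)$) yields $\Delta w_{i-1} = -(\lambda_2 - \lambda_1)\, u_i(\lambda_2) u_i(\lambda_1)$, whose sign is opposite to that of $u_i(\lambda_2)$. Hence $w_i$ strictly decreases on $\intset{0}{k}$ and strictly increases on $\intset{k}{N}$, and since both endpoints vanish we conclude $w_i \le 0$ throughout. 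The elementary identity $\Delta r_i = w_i / [u_i(\lambda_1) u_{i+1}(\lambda_1)]$, with positive denominator, then gives $\Delta r_i \le 0$; that is, $r_i$ is monotone non-increasing.

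Because $r_i$ is strictly positive on $\intset{1}{k}$ and non-positive on $\intset{k+1}{N}$, the sequence $|r_i|$ is non-increasing on the first block and non-decreasing on the second, so the level set $\{i : |r_i| \ge 1\}$ is a disjoint union of an initial block $\intset{1}{m}$ and a terminal block $\intset{n+1}{N}$ for some $m \le n$. Writing $g_i = u_i^2(\lambda_1)(|r_i|-1)(|r_i|+1)$, the sign of $g_i$ matches that of $|r_i| - 1$, so $g_i \ge 0$ on $\intset{1}{m} \cup \intset{n+1}{N}$ and $g_i < 0$ on $\intset{m+1}{n}$. Normalization $\sum_i g_i = 0$ prevents $g$ from having a constant sign and forces the middle negative block to be nonempty, giving $m < n$.

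I expect the main obstacle to be the Casoratian bookkeeping: one must deduce $w_0 = w_N = 0$ from the fictitious-vertex convention, get the sign of $\Theta_{W,i}$ right in \cref{eqn:Casoratian}, and notice that the resulting $U$-shape of $w_i$ survives across the node of $\mathbf{u}(\lambda_2)$---it is this global $U$-shape (rather than a piecewise argument) that yields monotonicity of $r_i$ on the entire path. A secondary subtlety is that convexity of $W$ should be invoked, if needed, to rule out the degenerate case in which one of the flanking blocks is empty; morally a convex $W$ attains its minimum in the interior, forcing $|\mathbf{u}(\lambda_2)|$ to dominate $\mathbf{u}(\lambda_1)$ at both endpoints and hence $|r_1|, |r_N| \ge 1$.
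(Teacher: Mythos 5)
Your proposal is correct and follows essentially the same route as the paper: the Casoratian identity with $w_0=w_N=0$, the sign of $\Delta w_{i-1}$ determined by the single generalized zero of $\mathbf{u}(\lambda_2)$ (your ``U-shape'' of $w_i$ is just the paper's two partial sums from either endpoint), monotonicity of the ratio $u_i(\lambda_2)/u_i(\lambda_1)$, and orthogonality/normalization to force the middle block to be nonempty. The only cosmetic difference is that the paper never invokes convexity of $W$ in this lemma (nor does it need to rule out an empty flanking block --- that degenerate case is absorbed later in Lemma~\ref{lem:simplify}), so your closing worry about needing convexity here is unnecessary.
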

  \begin{proof}
    The intersections of $\mathbf{u}(\lambda_1)$ and $\mathbf{u}(\lambda_2)$ can be characterized by the behavior of the quantity
    \begin{align}
      \Delta\left(\frac{u_i(\lambda_2)}{u_i(\lambda_1)} \right) &= \frac{u_{i+1}(\lambda_2)u_i(\lambda_1)-u_i(\lambda_2)u_{i+1}(\lambda_1)}{u_{i+1}(\lambda_1)u_i(\lambda_1)} \\
								&\equiv \frac{w_i\big(\mathbf{u}(\lambda_2),\mathbf{u}(\lambda_1)\big)}{u_{i+1}(\lambda_1)u_i(\lambda_1)}. \label{eq:cas_temp}
    \end{align}

    For simplicity, let $w_i = w_i\big(\mathbf{u}(\lambda_2),\mathbf{u}(\lambda_1)\big)$. Then, in \cref{eqn:Casoratian} we can set $\alpha=\beta=0$, yielding
    \begin{equation}\label{eqn:Casoratian_diff}
      \Delta w_{i-1} = -\Gamma u_i(\lambda_2)u_i(\lambda_1)
    \end{equation}
    and since $u_0(\cdot) = u_1(\cdot)$ and $u_N(\cdot) = u_{N+1}(\cdot)$, $w_{0}=w_{N}=0$. Thus, from \cref{eqn:forward_sum,eqn:Casoratian_diff} we have
    \begin{align}
      w_n &= w_0 + \sum_{i=0}^{n-1}\Delta w_i \\
	  &= -\Gamma \sum_{i=0}^{n-1}u_{i+1}(\lambda_2)u_{i+1}(\lambda_1) \label{eqn:cas_up} \\
	  &= \Gamma \sum_{i=n}^{N-1}u_{i+1}(\lambda_2)u_{i+1}(\lambda_1) \label{eqn:cas_down}.
    \end{align}
    Here, because $\mathbf{H}_W(\mathbb{P}_N)$ is a Jacobi matrix, we are free to choose $\mathbf{u}(\lambda_1)$ as everywhere positive and $\mathbf{u}(\lambda_2)$ as initially positive with no loss of generality. Further, it is known that $\mathbf{u}(\lambda_1)$ has no generalized zeros and $\mathbf{u}(\lambda_2)$ has exactly one, which we identify with $u_{\sigma}(\lambda_2)$. (See \textit{e.g.} Gantmakher.\cite{gantmakher2002oscillation}) Then, from \cref{eqn:cas_up}
    \begin{eqnarray}
      w_{n\leq \sigma} &=& -\Gamma \sum_{i=0}^{n-1}u_{i+1}(\lambda_2)u_{i+1}(\lambda_1) \\
		  &\leq& 0
    \end{eqnarray}

    Similarly, from \cref{eqn:cas_down}
    \begin{eqnarray}
      w_{n > \sigma} &=& \Gamma \sum_{i=n}^{N-1}u_{i+1}(\lambda_2)u_{i+1}(\lambda_1) \\
		&\leq& 0
    \end{eqnarray}
    so that we have $w_n \leq 0 \; \forall \; n \in \intset{0}{N}$.

    Finally, by \cref{eq:cas_temp} we arrive at
    \begin{equation}
      \Delta\left(\frac{u_i(\lambda_2)}{u_i(\lambda_1)} \right)\leq 0 \; \forall \; i \in \intset{0}{N}.
    \end{equation}
    Now, this sequence can be divided into three regions, where we will find that at least two of these regions are nonempty. Specifically, that this quantity is always decreasing guarantees that there exists some choice of $m < n \in \intset{1}{N}$ such that
    \begin{equation}\label{eqn:regions}
      \begin{cases}
    	\left(\frac{u_i(\lambda_2)}{u_i(\lambda_1)} \right) > 1, & i \in \intset{1}{m} \\
    	-1 \leq \left(\frac{u_i(\lambda_2)}{u_i(\lambda_1)} \right) \leq 1, & i \in \intset{m+1}{n} \\
    	\left(\frac{u_i(\lambda_2)}{u_i(\lambda_1)} \right) < -1, & i \in \intset{n+1}{N}
      \end{cases}
    \end{equation}
    and hence $\big(u_{i}^2(\lambda_2) - u_{i}^2(\lambda_1)\big)_{i=1}^N$ has at most two generalized zeros. Further, that $u_{i}(\lambda_2),u_{i}(\lambda_1)$ are normalized and orthogonal eigenvectors guarantees that $\big(u_{i}^2(\lambda_2) - u_{i}^2(\lambda_1)\big)_{i=1}^N$ has at least one generalized zero. Thus, our proof is complete.
  \end{proof}

  Using \Cref{lem:intersection} we now prove a discrete analogue of Lemma 3.2 from Lavine\cite{Lavine1994}:
  \begin{lemma}\label{lem:lavine}
	Let $\mathcal{W}$ be the set of convex potentials and $\mathcal{L}\subseteq{\mathcal{W}}$ be the set of linear potentials. Let $\mathbf{u}(\lambda_1)$, $\mathbf{u}(\lambda_2)$ be the two lowest eigenvectors of some $\mathbf{H}_{W}(\mathbb{P}_N)$ satisfying \cref{eqn:regions}.  Then, $\forall \;W\in\mathcal{W} \; \exists L \in \mathcal{L} \;\rvert\; \Gamma(\mathbf{H}_{W}(\mathbb{P}_N)) \geq \Gamma(\mathbf{H}_{L}(\mathbb{P}_N))$.
  \end{lemma}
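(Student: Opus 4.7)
The plan is to interpolate linearly between $W$ and a secant $L$ chosen via \Cref{lem:intersection}, and monitor the gap along the interpolation using \Cref{thm:Hellman-Feynman}. Concretely, I would first apply \Cref{lem:intersection} to the given $W$ to obtain indices $m < n \in \intset{1}{N}$ with $u_i^2(\lambda_2) - u_i^2(\lambda_1) \geq 0$ on $\intset{1}{m} \cup \intset{n+1}{N}$ and $u_i^2(\lambda_2) - u_i^2(\lambda_1) < 0$ on $\intset{m+1}{n}$. I would then take $L \in \mathcal{L}$ to be the unique linear function satisfying $L_m = W_m$ and $L_{n+1} = W_{n+1}$. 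Because $W$ is convex, this secant lies above $W$ on $\intset{m}{n+1}$ and below $W$ on the complementary region, so the sign of $L_i - W_i$ is pointwise opposite to that of $u_i^2(\lambda_2) - u_i^2(\lambda_1)$ (with equality at the two endpoints $i=m,\,n+1$).

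Next I would consider the one-parameter family $W_t = (1-t)W + tL$ for $t \in [0,1]$, noting that each $W_t$ is convex since $L$ is affine. Writing $\Gamma(t)$ for the gap of $\mathbf{H}_{W_t}(\mathbb{P}_N)$ and applying \Cref{thm:Hellman-Feynman} in the form of \cref{eqn:Hellman-Feynman} gives
\[
  \frac{d\Gamma}{dt} = \sum_{i=1}^{N} (L_i - W_i)\bigl(u_i^2(\lambda_2;t) - u_i^2(\lambda_1;t)\bigr).
\]
By the sign-matching arranged in the previous paragraph, every summand is non-positive at $t=0$, so $\left.\frac{d\Gamma}{dt}\right|_{t=0} \leq 0$. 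If the same termwise inequality persists for all $t \in [0,1]$, integrating yields $\Gamma(\mathbf{H}_L(\mathbb{P}_N)) \leq \Gamma(\mathbf{H}_W(\mathbb{P}_N))$, which is the desired conclusion.

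The main obstacle is precisely that this termwise inequality need not persist for $t>0$. \Cref{lem:intersection} applied to the convex potential $W_t$ produces sign-change indices $m_t, n_t$, but these may drift away from the fixed $m, n$, so the sign pattern of $u_i^2(\lambda_2;t) - u_i^2(\lambda_1;t)$ could become incompatible with that of the fixed vector $L - W$. I anticipate two possible resolutions. The first is to use the Casoratian identity \cref{eqn:Casoratian}, together with the convexity of $W_t$, to constrain the drift of $m_t, n_t$ so that the termwise sign match persists. The second, more flexible, route is to re-adapt $L$ at each $t$ by reapplying \Cref{lem:intersection} to $W_t$, turning the deformation into a controlled trajectory through $\mathcal{L}$ (modulo the irrelevant constant shift, which does not affect $\Gamma$); continuity of the eigenvalues in the potential together with compactness of a suitable sublevel set of $\mathcal{L}$ then delivers a terminal $L^* \in \mathcal{L}$ with $\Gamma(\mathbf{H}_{L^*}(\mathbb{P}_N)) \leq \Gamma(\mathbf{H}_W(\mathbb{P}_N))$. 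I expect the first route to be the cleanest, and the key technical step will be turning convexity of $W_t$ into enough control on where $u_i^2(\lambda_2;t) - u_i^2(\lambda_1;t)$ changes sign.
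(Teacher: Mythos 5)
Your preferred first route has a genuine gap, and it is the one you yourself flag: once $L$ is frozen at its $t=0$ value, nothing pins the sign-change indices of $u_i^2(\lambda_2;t)-u_i^2(\lambda_1;t)$ to the original $m,n$. The Casoratian identity \cref{eqn:Casoratian} does not rescue this --- applied to the potential $W_t$ it only reproves \Cref{lem:intersection} for $W_t$, with whatever drifted indices $m_t,n_t$ that potential produces --- so the termwise non-positivity of $d\Gamma/dt$ is only guaranteed at $t=0$ and cannot be integrated to $t=1$ as written. I do not see how to control the drift without effectively doing the work of your second route.

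Your second route is exactly what the paper does, implemented as a flow rather than a finite interpolation followed by a compactness argument. The paper evolves the potential by $\frac{dw}{d\alpha}(i;\alpha)=l_{w(\cdot;\alpha)}(i)-w(i;\alpha)$, where $l_{w(\cdot;\alpha)}$ is the secant through $\big(m(\alpha),w(m(\alpha);\alpha)\big)$ and $\big(n(\alpha),w(n(\alpha);\alpha)\big)$ with $m(\alpha),n(\alpha)$ recomputed from \Cref{lem:intersection} for the \emph{current} potential. At every instant the velocity of the potential is sign-matched against $u_i^2(\lambda_2)-u_i^2(\lambda_1)$, so \cref{eqn:Hellman-Feynman} gives $d\Gamma/d\alpha\leq 0$ for all $\alpha$, not merely at the start; convexity is preserved because the right-hand side of the flow adds a multiple of an affine function to a rescaling of $w$. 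In place of your ``compactness of a sublevel set of $\mathcal{L}$'' step, the paper integrates the linear ODE explicitly, $w(i;\alpha)=e^{-\alpha}w(i)+\int_0^\alpha e^{s-\alpha}\,l_{w(\cdot;s)}(i)\,ds$ (\cref{eqn:converge}); since the integral term is affine in $i$ for every $\alpha$ while $e^{-\alpha}w(i)$ decays, $w(\cdot;\alpha)$ approaches $\mathcal{L}$, and continuity of the gap in the potential finishes the argument. So your diagnosis and your fallback are both correct; the proof you need is the fallback, committed to and carried out via this flow and its explicit solution.
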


  \begin{proof}
  	 Identify with $W(V_i)$ a convex function $w: \mathbb{R} \rightarrow \mathbb{R}$ such that $w(i)=W(V_i) \; \forall \; i \in \intset{1}{N}$. Then, we define the linear function $l_w: \mathbb{R} \rightarrow \mathbb{R}$ as
  	\begin{eqnarray}
  		l_w(i) = \frac{1}{n-m}\bigg((n-i)w(m) + (i-m)w(n)\bigg)
  	\end{eqnarray}
  	with $n$ and $m$ defined as in \Cref{lem:intersection}, and identify it with the corresponding $L_W \in \mathcal{L}$. Notably, $L_W(V_i) \leq W(V_i) \; \forall \; i \in \intset{1}{m}\cup\intset{n+1}{N}$ and $L(V_i) \geq W(V_i) \; \forall \; i \in \intset{m+1}{n}$. Then, clearly
	\begin{equation}
	  \langle\mathbf{W}-\mathbf{L}_W\rangle_{\mathbf{u}(\lambda_2)} - \langle\mathbf{W}-\mathbf{L}_W\rangle_{\mathbf{u}(\lambda_1)} \geq 0 \label{eq:39}
	\end{equation}
	where equality is obtained only when $W=L_W$.

	Now we consider the Schr\"{o}dinger operator that satisfies
	\begin{equation}
		\mathbf{H}_{W}(\mathbb{P}_N;\alpha) = \mathbf{H}_{W(\alpha)}(\mathbb{P}_N)
	\end{equation}
	and identify with $W(\alpha)$ the convex function $w(i;\alpha)$
	\begin{eqnarray}
		w(i;0) &=& w(i) \\
		\frac{d{w}}{d \alpha}(i;\alpha) &=& l_{w(\cdot;\alpha)}(i) - w(i;\alpha).
	\end{eqnarray}
	Thus, by \cref{eqn:Hellman-Feynman,eq:39} we have that the gap of $\mathbf{H}_{W(\alpha)}(\mathbb{P}_N)$ decreases with $\alpha$ and additionally that
	\begin{equation}\label{eqn:converge}
		w(i;\alpha) = e^{-\alpha}w(i) + \int_0^\alpha \frac{e^{s-\alpha}}{n(s)-m(s)}\bigg((n(s)-i)w\big(m(s);s\big) + (i-m(s))w\big(n(s);s\big)\bigg)ds.
	\end{equation}
	Hence, as $\alpha$ increases, we have that $w(i;\alpha)$ gets arbitrarily close to a linear function and therefore $W(\alpha)$ gets arbitrarily close to some function in $\mathcal{L}$.
  \end{proof}

  \subsection{The gap for linear potentials is lower bounded by the gap for constant potentials.}
    We start with $\mathbf{u}(\lambda_2),\mathbf{u}(\lambda_1)$ as the eigenvectors of $\mathbf{H}_{W}(\mathbb{P}_N)$ for some $W \in \mathcal{W}$. By \Cref{lem:lavine} we need only demonstrate that gaps associated with the class of linear potentials are lower bounded by the gaps associated with the constant potential. Because we are confined to a discrete setting, this takes a bit of work. The overall strategy is as follows: First, we restrict ourselves to a particular class of linear potentials and demonstrate that $\mathbf{u}(\lambda_1)$ is strictly decreasing. Then, we prove some facts about the ordering of the components of $\mathbf{u}(\lambda_2)$ around its node. Next, we demonstrate that for positive slopes, $\mathbf{u}(\lambda_2)$ always has a node left of center. These facts combine to complete our proof.

    We introduce the notation $\left[\mathbf{U}\right]_{ij} = (i-1)\delta_{ij}$ for the unit linear potential. Note that for any linear potential $L \in \mathcal{L}$ with slope $\alpha$, the potential $\alpha U$ has the same gap. Thus, we restrict our study to the unit potential multiplied by some parameter $\alpha$. Further, symmetry allows us to restrict ourselves to the case that $\alpha\geq0$.

    Our goal is to demonstrate that
    \begin{equation}\label{eqn:goal}
    	\frac{d \Gamma(\alpha)}{d \alpha} > 0
    \end{equation}
    for all $\alpha \geq 0$.

    We make use of the following lemma to reduce to the case that $u_1^2(\lambda_2) > u_1^2(\lambda_1)$:
    \begin{lemma}\label{lem:simplify}
    	Let $\alpha U\in\mathcal{L}$ where $U$ is the unit-linear potential. Then, for $\mathbf{H}_{\alpha U}(\mathbb{P}_N)$, if $u_1^2(\lambda_2) \leq u_1^2(\lambda_1)$, \cref{eqn:goal} is satisfied.
    \end{lemma}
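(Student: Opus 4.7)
The plan is to apply \Cref{thm:Hellman-Feynman} directly and then exploit the sign pattern from \Cref{lem:intersection}. Since $\frac{d}{d\alpha}\mathbf{H}_{\alpha U}(\mathbb{P}_N) = \mathbf{U}$, the Hellman-Feynman formula gives
\begin{equation*}
  \frac{d\Gamma(\alpha)}{d\alpha} = \expected{\mathbf{U}}{\mathbf{u}(\lambda_2)} - \expected{\mathbf{U}}{\mathbf{u}(\lambda_1)} = \sum_{i=1}^N (i-1)\, f_i,
\end{equation*}
where I set $f_i \stackrel{\text{def}}{=} u_i^2(\lambda_2) - u_i^2(\lambda_1)$. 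Since both eigenvectors are normalized, $\sum_i f_i = 0$, so the sum on the right is invariant under shifting the weight $(i-1)$ by any global constant. This shift freedom is the tool that will let me turn a mixed-sign weighted sum into a manifestly nonnegative one.

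Next I would invoke \Cref{lem:intersection}, which says that $(f_i)$ is nonnegative on $\intset{1}{m}\cup\intset{n+1}{N}$ and strictly negative on $\intset{m+1}{n}$ for some indices $m<n$. The hypothesis $u_1^2(\lambda_2)\leq u_1^2(\lambda_1)$ is exactly $f_1\leq 0$, which forces the left nonnegative block to contribute nothing; effectively I may take $m=0$, giving $f_i\leq 0$ on $\intset{1}{n}$ and $f_i\geq 0$ on $\intset{n+1}{N}$, with strict negativity somewhere inside $\intset{1}{n}$.

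The final step is a weighted rearrangement. Using $\sum_i f_i = 0$, I would rewrite
\begin{equation*}
  \sum_{i=1}^N (i-1)\,f_i = \sum_{i=1}^N (i-n-1)\,f_i.
\end{equation*}
For $i\in\intset{1}{n}$ both factors $(i-n-1)$ and $f_i$ are nonpositive, so each summand is nonnegative; for $i\in\intset{n+1}{N}$ both factors are nonnegative, so again each summand is nonnegative. The strict negativity of $f_i$ at some interior index in $\intset{1}{n}$, combined with a strictly negative weight $(i-n-1)\leq -1$ there, makes at least one summand strictly positive and delivers \cref{eqn:goal}.

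The main obstacle I anticipate is the edge-case bookkeeping around \Cref{lem:intersection}: one must verify carefully that $f_1\leq 0$ is genuinely compatible with taking $m=0$ in the trichotomy of \cref{eqn:regions}, and that the orthogonality and normalization of $\mathbf{u}(\lambda_1),\mathbf{u}(\lambda_2)$ together prevent the degenerate possibility that $f_i\equiv 0$, so the rearranged sum is strictly, not merely weakly, positive.
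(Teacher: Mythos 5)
Your proof is correct and takes essentially the same route as the paper's: apply Hellman--Feynman, use normalization of the eigenvectors to shift the linear weight $(i-1)$ by an arbitrary constant, and then pick that constant at the sign change of $f_i = u_i^2(\lambda_2)-u_i^2(\lambda_1)$ guaranteed by \Cref{lem:intersection} (with the hypothesis forcing the left nonnegative block to be empty) so that every summand is nonnegative. Your choice of shift $n+1$ versus the paper's $n$ is immaterial, and your extra attention to why the sum is strictly rather than weakly positive is, if anything, more careful than the paper's own wording.
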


    \begin{proof}
      By \cref{eqn:Hellman-Feynman},
      \begin{align}
      	\frac{d\Gamma(\alpha)}{d \alpha} &= \sum_{i=1}^{N}\left(u_i^2(\lambda_2)-u_i^2(\lambda_1)\right) (i-1)  \\
      	&= \sum_{i=1}^{N}\left(u_i^2(\lambda_2)-u_i^2(\lambda_1)\right) (i - c) \label{eqn:bla}
      \end{align}
      for any constant $c$. (Recall that the $\mathbf{u}(\lambda)$ are normalized eigenvectors.) From \Cref{lem:intersection} we know that if $u_1^2(\lambda_2) \leq u_1^2(\lambda_1)$ then $\exists n < N$ such that $u_i^2(\lambda_2)-u_i^2(\lambda_1) > 0$ for all $i>n$. Choosing $c = n$ we get that eq. (\ref{eqn:bla}) is non-negative for each term of the sum, thus completing the proof.
    \end{proof}

    Having reduced to the case that $u_1^2(\lambda_2) \geq u_1^2(\lambda_1)$, we now prove that $\mathbf{u}(\lambda_1)$ is a decreasing sequence:

    \begin{lemma}\label{lem:decreasing}
    	Let $\mathbf{H}_{\alpha U}(\mathbb{P}_N)$ be defined as in \Cref{lem:simplify}. Then, $\mathbf{u}(\lambda_1)$ is a decreasing sequence. Further, for $\alpha > 0$, $\mathbf{u}(\lambda_1)$ is strictly decreasing.
    \end{lemma}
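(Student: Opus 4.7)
\emph{Plan.} Since $\mathbf{u}(\lambda_1)$ is strictly positive (a standard property of the ground state of a Jacobi matrix, already invoked in the proof of \Cref{lem:intersection}), the recurrence \cref{eqn:recurrence_difference} reads $\Delta^2 u_i(\lambda_1) = (\alpha(i-1) - \lambda_1)\, u_i(\lambda_1)$, so the sign of $\Delta^2 u_i(\lambda_1)$ matches that of the affine quantity $\alpha(i-1) - \lambda_1$, which transitions from negative to nonnegative at most once as $i$ grows. I would introduce the first-difference sequence $\delta_i := \Delta u_{i-1}(\lambda_1)$ for $i \in \intset{1}{N+1}$. The fictitious boundary conditions $u_0(\lambda_1) = u_1(\lambda_1)$ and $u_{N+1}(\lambda_1) = u_N(\lambda_1)$ force $\delta_1 = \delta_{N+1} = 0$, while the increments of $\delta$ satisfy $\delta_{i+1} - \delta_i = \Delta^2 u_i(\lambda_1)$. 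Hence $\delta$ is non-increasing on $\intset{1}{k^*}$ and non-decreasing on $\intset{k^*}{N+1}$ for some index $k^*$, i.e., $\delta$ has a ``V-shape.'' Combined with $\delta_1 = 0$, this immediately yields $\delta_i \leq 0$ throughout, which is the weak monotonicity $u_{i+1}(\lambda_1) \leq u_i(\lambda_1)$.

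For strict monotonicity when $\alpha > 0$, the plan is to strengthen the two endpoint increments of $\delta$. Summing the identity $\delta_{i+1} - \delta_i = (\alpha(i-1) - \lambda_1) u_i(\lambda_1)$ over $i \in \intset{1}{N}$ and using $\delta_{N+1} = \delta_1 = 0$ gives $\sum_{i=1}^N (\alpha(i-1) - \lambda_1) u_i(\lambda_1) = 0$, which expresses $\lambda_1$ as a weighted average of the values $W_i = \alpha(i-1)$ with strictly positive weights $u_i(\lambda_1)$. When $\alpha > 0$ the $W_i$ are nonconstant, so this weighted average satisfies $W_1 = 0 < \lambda_1 < W_N = \alpha(N-1)$. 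Therefore $\delta_2 - \delta_1 = -\lambda_1 u_1(\lambda_1) < 0$ and $\delta_{N+1} - \delta_N = (W_N - \lambda_1) u_N(\lambda_1) > 0$, giving $\delta_2 < 0$ and $\delta_N < 0$. The V-shape then upgrades these to $\delta_i < 0$ for every $i \in \intset{2}{N}$: on $\intset{2}{k^*}$ non-increasingness gives $\delta_i \leq \delta_2 < 0$, and on $\intset{k^*}{N}$ non-decreasingness gives $\delta_i \leq \delta_N < 0$. This is precisely the assertion that $\mathbf{u}(\lambda_1)$ is strictly decreasing.

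The main obstacle is obtaining the strict inequality uniformly across the interior; the V-shape alone gives only weak inequality, because $\delta$ could in principle vanish along the flat bottom of a V. The weighted-average identity for $\lambda_1$ is the key new ingredient: it forces both endpoint increments of $\delta$ to be strict, and the one-sided monotonicity of $\delta$ on each arm of the V then propagates this strictness to every intermediate index.
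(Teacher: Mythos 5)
Your proof is correct and follows essentially the same route as the paper's: both arguments rest on the Neumann-like boundary conditions $\Delta u_0(\lambda_1)=\Delta u_N(\lambda_1)=0$ together with the fact that $\Delta^2 u_i(\lambda_1)=(\alpha(i-1)-\lambda_1)u_i(\lambda_1)$ changes sign at most once, so the difference sequence is V-shaped and pinned to zero at both ends. Your weighted-average identity $\lambda_1=\sum_i W_i u_i(\lambda_1)/\sum_i u_i(\lambda_1)$, which forces $0<\lambda_1<\alpha(N-1)$ and hence strict negativity of both endpoint increments, is a cleaner and more explicit justification of the strictness claim than the paper's brief remark that the second boundary ``cannot be a maximum.''
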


    \begin{proof}
    First we note that at the boundaries, $\Delta u_0(\lambda_1) = \Delta u_N(\lambda_1) = 0$. Thus we know that the boundaries are local extrema of the $\mathbf{u}(\lambda_1)$-line. Now, we note that by \cref{eqn:recurrence}
    \begin{equation}
      \frac{u_2(\lambda_1)}{u_1(\lambda_1)} = (1 - \lambda_1) \leq 1
    \end{equation}
    where the inequality is strict for $\alpha > 0$ since this requires that $\lambda_1 > 0$. Thus, the $\mathbf{u}(\lambda_1)$-line is initially decreasing. Note that from \cref{eqn:recurrence_difference} when $\mathbf{W} = \mathbf{U}$, $\Delta^2 u_i(\lambda_1)$ has at most one sign change. Thus, the second boundary term cannot be a maximum and, therefore, both boundaries must be global extrema. We therefore have that $\mathbf{u}(\lambda_1)$ is decreasing for $\alpha \geq 0$ and strictly decreasing for $\alpha > 0$.
    \end{proof}

    We now recall a theorem by Cauchy and use it to derive an upper bound for $\lambda_2$:
    \begin{theorem}[Cauchy Interlace Theorem]\label{thm:Cauchy}
    	Let $\mathbf{A}$ be an $N\times N$ Hermitian matrix with eigenvalues $\lambda_1 \leq \lambda_2 \leq \dots \leq \lambda_N$. Suppose that B is an $(N-1)\times (N-1)$ principal submatrix of $\mathbf{A}$ with eigenvalues $\mu_1 \leq \mu_2 \leq \dots \leq \mu_{N-1}$. Then, the eigenvalues are ordered such that $\lambda_1 \leq \mu_1 \leq \lambda_2 \leq \mu_2 \leq \dots \leq \lambda_{N-1} \leq \mu_{N-1} \leq \lambda_N$.
    \end{theorem}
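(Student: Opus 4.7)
The plan is to prove the Cauchy Interlace Theorem via the Courant--Fischer min-max characterization of eigenvalues, exploiting the natural embedding of $\mathbb{C}^{N-1}$ into $\mathbb{C}^N$ induced by the principal-submatrix structure.

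First I would set up notation. Without loss of generality assume $\mathbf{B}$ is obtained from $\mathbf{A}$ by deleting the $j$-th row and column, and define the linear embedding $\iota : \mathbb{C}^{N-1} \hookrightarrow \mathbb{C}^N$ that inserts a $0$ in the $j$-th coordinate. The key observation is that $\iota$ is isometric and satisfies $\langle \iota(y), \mathbf{A}\iota(y)\rangle = \langle y, \mathbf{B} y\rangle$ for every $y \in \mathbb{C}^{N-1}$, because the $j$-th row and column of $\mathbf{A}$ get multiplied by the vanishing $j$-th entry. Thus $\iota$ sends any $k$-dimensional subspace $T \subseteq \mathbb{C}^{N-1}$ to a $k$-dimensional subspace $\iota(T) \subseteq \mathbb{C}^N$ contained in the hyperplane $H_j = \{x : x_j = 0\}$, and the Rayleigh quotients match.

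Next I would recall the two Courant--Fischer formulas
\begin{equation}
\lambda_k = \min_{\dim S = k}\;\max_{\substack{x \in S \\ \|x\|=1}} \langle x, \mathbf{A} x\rangle = \max_{\dim S = N-k+1}\;\min_{\substack{x \in S \\ \|x\|=1}} \langle x, \mathbf{A} x\rangle,
\end{equation}
and the analogous ones for $\mathbf{B}$ with $N$ replaced by $N-1$. For the lower bound $\lambda_k \leq \mu_k$, apply the min-max form to $\mathbf{B}$: its minimum ranges over all $k$-dimensional subspaces $T \subset \mathbb{C}^{N-1}$, while by the observation above this equals the minimum of the same Rayleigh maximum taken only over $k$-dimensional subspaces of $\mathbb{C}^N$ lying inside $H_j$. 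Restricting the minimum of $\mathbf{A}$'s min-max to this smaller family of subspaces can only increase its value, giving $\lambda_k \leq \mu_k$.

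For the upper bound $\mu_k \leq \lambda_{k+1}$, I would use the max-min form. Here $\mu_k$ is the maximum over $(N-1-k+1) = (N-k)$-dimensional subspaces of $\mathbb{C}^{N-1}$ of the minimum Rayleigh quotient. After lifting via $\iota$, this becomes the maximum over $(N-k)$-dimensional subspaces of $\mathbb{C}^N$ \emph{contained in $H_j$}. Since $\lambda_{k+1}$ equals the maximum of the same quantity taken over \emph{all} $(N-k)$-dimensional subspaces of $\mathbb{C}^N$, we get $\mu_k \leq \lambda_{k+1}$. Chaining these inequalities over $k = 1,\dots, N-1$ yields the interlacing. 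The main thing to be careful about is nothing deep but merely bookkeeping: matching the dimension indices in both Courant--Fischer formulations so that the two halves of the interlacing line up correctly, and verifying that the embedding $\iota$ preserves both the inner product and the Rayleigh quotient, which follows immediately from $\mathbf{B}$ being a principal submatrix.
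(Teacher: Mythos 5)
Your proof is correct: the dimension bookkeeping works out ($\mu_k$ as a min over $k$-dimensional subspaces of $\mathbb{C}^{N-1}$ for the lower bound, and as a max over $(N-k)$-dimensional subspaces for the upper bound, matched against $\lambda_k$ and $\lambda_{k+1}$ respectively), and the observation that the isometric embedding $\iota$ preserves Rayleigh quotients is exactly what is needed to compare the two variational problems. Note, however, that the paper does not actually prove this theorem at all --- it simply defers to Hwang's article --- so there is no in-paper argument to match yours against. Your Courant--Fischer argument is the classical textbook proof. Hwang's cited proof is deliberately different in spirit: it is an elementary argument that avoids the min-max principle entirely, working instead with the bordered-determinant identity $\det(x\mathbf{I}-\mathbf{A}) = (x-a)\det(x\mathbf{I}-\mathbf{B}) - \mathbf{y}^{*}\,\mathrm{adj}(x\mathbf{I}-\mathbf{B})\,\mathbf{y}$ for $\mathbf{A} = \bigl(\begin{smallmatrix}\mathbf{B} & \mathbf{y}\\ \mathbf{y}^{*} & a\end{smallmatrix}\bigr)$ and tracking sign changes of the characteristic polynomials. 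Your route is arguably the more standard and more easily generalized one (it extends immediately to deleting $r$ rows and columns, giving $\lambda_k \leq \mu_k \leq \lambda_{k+r}$); Hwang's buys self-containedness without variational machinery. Either is perfectly adequate for the use the paper makes of the theorem in Lemma 5.
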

    \begin{proof}
      For proof, we refer the reader to Hwang. \cite{Hwang2004}
    \end{proof}
    \begin{lemma}\label{lem:upper}
    	Suppose that an $N \times N$ Hermitian matrix $\mathbf{A}$ with $N \geq 3$ has the $3\times 3$ principal submatrix
    	\[\mathbf{B}(\delta) = \left( \begin{array}{ccc}
	2 - \delta & -1 & 0 \\
	-1 & 2 + \alpha & -1 \\
	0 & -1 & 2 + 2\alpha
    	\end{array} \right) \]
    	with $\delta \geq 0$. Then, if $\lambda_2$ is the second lowest eigenvalue of $\mathbf{A}$, $\lambda_2 \leq 2 + \alpha$.
    \end{lemma}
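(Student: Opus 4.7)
The plan is to combine the Cauchy Interlace Theorem (\Cref{thm:Cauchy}) with an explicit spectral analysis of $\mathbf{B}(\delta)$. First I would apply \Cref{thm:Cauchy} iteratively, peeling off one row and its corresponding column at a time until $\mathbf{A}$ is reduced to the given $3 \times 3$ principal submatrix $\mathbf{B}(\delta)$. Chaining the interlacing inequality at the $k=2$ position across each deletion yields, with $\mu_1 \leq \mu_2 \leq \mu_3$ denoting the eigenvalues of $\mathbf{B}(\delta)$,
\[
    \lambda_2(\mathbf{A}) \leq \mu_2,
\]
so the problem reduces to the purely $3 \times 3$ claim $\mu_2 \leq 2 + \alpha$.

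To handle that, I would shift by $2+\alpha$ and work with
\[
    \mathbf{M} \stackrel{\text{def}}{=} \mathbf{B}(\delta) - (2+\alpha)\mathbf{I} = \begin{pmatrix} -\delta-\alpha & -1 & 0 \\ -1 & 0 & -1 \\ 0 & -1 & \alpha \end{pmatrix},
\]
whose eigenvalues are $\nu_i \stackrel{\text{def}}{=} \mu_i - (2+\alpha)$, so the goal becomes $\nu_2 \leq 0$. A short computation gives $\operatorname{tr} \mathbf{M} = -\delta$ and $\det \mathbf{M} = \delta$, and hence the three real eigenvalues of $\mathbf{M}$ satisfy
\[
    \nu_1 + \nu_2 + \nu_3 = -\delta \leq 0, \qquad \nu_1 \nu_2 \nu_3 = \delta \geq 0.
\]

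From these two facts $\nu_2 \leq 0$ follows by contradiction: if $\nu_2 > 0$, then $\nu_3 \geq \nu_2 > 0$, so $\nu_2 \nu_3 > 0$; the product identity then forces $\nu_1 \geq 0$, hence $\nu_1 + \nu_2 + \nu_3 > 0$, contradicting the trace identity. Therefore $\nu_2 \leq 0$, i.e., $\mu_2 \leq 2+\alpha$, which combined with the interlacing reduction gives $\lambda_2(\mathbf{A}) \leq 2+\alpha$. I do not anticipate any substantive obstacle here; the one point that wants a little care is tracking the direction of the Cauchy interlace inequality under iterated application, i.e., observing that the second-smallest eigenvalue can only weakly increase as a row/column pair is removed, which is exactly the even-indexed entry in the interlacing chain of \Cref{thm:Cauchy}. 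Everything else is the $3 \times 3$ determinant and trace calculation above, which uses the specific shape of $\mathbf{B}(\delta)$ to produce the key cancellation $\operatorname{tr} \mathbf{M} + \det \mathbf{M} = 0$.
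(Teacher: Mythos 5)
Your proof is correct, and while it opens the same way as the paper's, the second half takes a genuinely different route. The paper also reduces to the $3\times 3$ claim $\mu_2(\delta) \leq 2+\alpha$ via iterated application of \Cref{thm:Cauchy}, but then finishes by a deformation argument: it invokes \Cref{thm:Hellman-Feynman} to get $\frac{d\mu_2(\delta)}{d\delta} = -\lvert u_1\rvert^2 \leq 0$, so $\mu_2(\delta)$ is nonincreasing in $\delta$, and then computes $\mu_2(0) = 2+\alpha$ directly (the shifted matrix at $\delta=0$ has characteristic polynomial $-\nu(\nu^2-\alpha^2-2)$, whose middle root is $0$). Your argument replaces that calculus step with the algebraic identities $\operatorname{tr}\mathbf{M} = -\delta$ and $\det\mathbf{M} = \delta$ for the shifted matrix $\mathbf{M} = \mathbf{B}(\delta)-(2+\alpha)\mathbf{I}$, from which $\nu_2 \leq 0$ follows by the sign contradiction you describe; I verified both identities and the contradiction is airtight, including the boundary case $\delta = 0$. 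What your version buys is that it is entirely elementary --- no derivative of an eigenvalue, hence no implicit non-degeneracy requirement along the path $\delta' \in [0,\delta]$ that Hellman--Feynman formally needs --- and it makes no use of the sign of $\alpha$. What the paper's version buys is stylistic consistency: the monotone-deformation-of-a-parameter argument is the same device used throughout the rest of the paper (e.g., in the reduction from convex to linear potentials), so the reader sees one technique reused. Your one flagged point of care, the direction of the interlacing inequality under iterated deletion, is handled correctly: the even-indexed chain of \Cref{thm:Cauchy} gives $\lambda_2(\mathbf{A}) \leq \lambda_2(\mathbf{B})$ for any principal submatrix $\mathbf{B}$ of size at least $2$.
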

    \begin{proof}
    	Let $\mu_1(\delta) \leq \mu_2(\delta) \leq \mu_3(\delta)$ be the eigenvalues of $\mathbf{B}(\delta)$. That $\lambda_2 \leq \mu_2(\delta)$ is obvious from repeated applications of \Cref{thm:Cauchy}. From, \Cref{thm:Hellman-Feynman},
    	\begin{equation}
    		\frac{d \mu_2(\delta)}{d \delta} \leq 0
    	\end{equation}
    	and by direct calculation, $\mu_2(0) = 2 + \alpha$. Thus, $\lambda_2 \leq 2 + \alpha$.
    \end{proof}

    \Cref{lem:upper} now combines with the following fact to give an ordering of the components of $\mathbf{u}(\lambda_2)$:
    \begin{lemma}\label{lem:convex_combinations}
    	Let $\mathbf{H}_{\alpha U}(\mathbb{P}_N)$ be defined as in \cref{lem:simplify} and let $\mathbf{u}(\lambda)$ be an eigenvector. Define the quantity
    	\begin{equation}\label{def:convex_combinations}
    		u_{i+\epsilon}(\lambda) \stackrel{\text{def}}{=} \epsilon u_{i+1}(\lambda) + (1-\epsilon)u_{i}(\lambda).
    	\end{equation}
	Then, for $u_{i}(\lambda)$ not a generalized zero,
	\begin{equation}\label{eqn:recurrence_cc}
	  u_{i+1+\epsilon}(\lambda) = \left(2+\alpha(j_{i+\epsilon}-1)-\lambda\right)u_{i+\epsilon} - u_{i-1+\epsilon}
	\end{equation}
	for some $j_{i+\epsilon} \in \left[i,i+1\right]$.
    \end{lemma}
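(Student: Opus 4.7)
The plan is to derive the claimed recurrence by taking the $\epsilon$-weighted linear combination of two consecutive instances of (\ref{eqn:recurrence}) and identifying the resulting effective potential coefficient with $\alpha(j_{i+\epsilon}-1)$. Specifically, applying (\ref{eqn:recurrence}) at indices $i+1$ and $i$ with $W_k = \alpha(k-1)$ gives
\begin{align*}
u_{i+2}(\lambda) &= (2 + \alpha i - \lambda)\,u_{i+1}(\lambda) - u_i(\lambda), \\
u_{i+1}(\lambda) &= (2 + \alpha(i-1) - \lambda)\,u_i(\lambda) - u_{i-1}(\lambda).
\end{align*}
Multiplying the first by $\epsilon$ and the second by $1-\epsilon$ and adding, the left side is $u_{i+1+\epsilon}(\lambda)$ by definition, the $-\epsilon u_i(\lambda) - (1-\epsilon) u_{i-1}(\lambda)$ tail collapses to $-u_{i-1+\epsilon}(\lambda)$, and the $(2-\lambda)$ portion of the remaining coefficients combines cleanly into $(2-\lambda)u_{i+\epsilon}(\lambda)$.

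What is left is the potential contribution $\alpha\bigl[\epsilon\, i\, u_{i+1}(\lambda) + (1-\epsilon)(i-1)\,u_i(\lambda)\bigr]$, which I want to write as $\alpha(j_{i+\epsilon} - 1)\,u_{i+\epsilon}(\lambda)$. Solving for $j_{i+\epsilon}$ yields
\begin{equation*}
j_{i+\epsilon} - 1 \;=\; \frac{\epsilon\, i\, u_{i+1}(\lambda) + (1-\epsilon)(i-1)\,u_i(\lambda)}{\epsilon\, u_{i+1}(\lambda) + (1-\epsilon)\, u_i(\lambda)},
\end{equation*}
which I recognize as a weighted average of the integers $i$ and $i-1$ with weights $\epsilon\, u_{i+1}(\lambda)$ and $(1-\epsilon)\, u_i(\lambda)$, respectively.

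The final step is to verify that this really is a genuine convex combination, and this is precisely where the ``not a generalized zero'' hypothesis earns its keep: by \Cref{def:generalized_zero} it forces $u_i(\lambda) \neq 0$ and $u_i(\lambda)\,u_{i+1}(\lambda) \geq 0$, so after pulling out the common sign both weights are nonnegative, the denominator $u_{i+\epsilon}(\lambda)$ is nonzero for $\epsilon \in [0,1)$, and the ratio lies in $[i-1,i]$, giving $j_{i+\epsilon} \in [i,i+1]$. I do not foresee a serious obstacle; the entire proof is a direct calculation. The only conceptual point worth flagging is that the sign condition built into the definition of a generalized zero is exactly what is needed to promote the formal expression for $j_{i+\epsilon}$ into an honest weighted average of integer grid points, and that the $(2-\lambda)$ pieces must recombine without leaving behind an $\epsilon$-dependent residue (which they do, by linearity of the convex-combination operation in $\epsilon$).
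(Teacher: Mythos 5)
Your proposal is correct and follows essentially the same route as the paper: take the $\epsilon$-weighted combination of the recurrence at indices $i+1$ and $i$, then absorb the potential term into an effective coefficient $\alpha(j_{i+\epsilon}-1)$, using the fact that $u_i$ not being a generalized zero forces $u_i \neq 0$ and $\operatorname{sign}(u_{i+1}) = \operatorname{sign}(u_i)$ so that $j_{i+\epsilon}-1$ is an honest weighted average of $i$ and $i-1$. Your write-up is in fact slightly more explicit than the paper's, which merely asserts the existence of $j_{i+\epsilon}\in[i,i+1]$ without displaying the quotient.
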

    \begin{proof}
    	First, note that from \cref{eqn:recurrence}
    	\begin{equation}\label{eqn:cc_1}
    		u_{i+1+\epsilon}(\lambda) = \left(2-\lambda\right)u_{i+\epsilon}(\lambda) + \alpha \left(i \epsilon u_{i+1}(\lambda) + (i-1)(1-\epsilon)u_{i}(\lambda) \right) -  u_{i-1+\epsilon}(\lambda).
    	\end{equation}
	Now, with $\text{sign}(u_{i+1}(\lambda))=\text{sign}(u_i(\lambda))$, there exists a $j_{i+\epsilon} \in \left[i,i+1\right]$ such that
	\begin{equation}
		i \epsilon u_{i+1}(\lambda) + (i-1)(1-\epsilon)u_{i}(\lambda) = (j_{i+\epsilon}-1) \left(\epsilon u_{i+1}(\lambda) + (1-\epsilon)u_{i}(\lambda)\right).
	\end{equation}
	Thus, \cref{eqn:cc_1} becomes
	\begin{equation}
    		u_{i+1+\epsilon}(\lambda) = \left(2+\alpha (j_{i+\epsilon}-1) -\lambda\right)u_{i+\epsilon}(\lambda) -  u_{i-1+\epsilon}(\lambda).
    	\end{equation}
    \end{proof}
    \begin{lemma}[Ordering $\mathbf{u}(\lambda_2)$]\label{lem:ordering_2}
    	Let $\mathbf{H}_{\alpha U}(\mathbb{P}_N)$ be defined as in \Cref{lem:simplify}. Let $x$ represent the first node of the $\mathbf{u}(\lambda_2)$-line and let $u_m(\lambda_2)$ be the corresponding generalized zero. Suppose that $x \leq (N+1)/2$. Let $u_{i+\epsilon}(\lambda)$ be defined as in \Cref{lem:convex_combinations}. Then,
    	\begin{equation}
	-1 \geq \begin{cases}
 			\frac{u_{m+k+\epsilon}(\lambda_2)}{u_{m-1-k+\epsilon}(\lambda_2)} \; \; \text{for $m \leq x \leq m + \frac{1}{2}$ and $k \in \intset{0}{m-1}$} \\
 			\frac{u_{m+k+\epsilon}(\lambda_2)}{u_{m-k+\epsilon}(\lambda_2)} \; \; \text{for $m + \frac{1}{2} < x \leq m + 1$ and $k \in \intset{1}{m}$}.
 		\end{cases}
 	\end{equation}
    \end{lemma}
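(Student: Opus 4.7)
The plan is to induct on $k$, leveraging the convex-combination recurrence of Lemma~\ref{lem:convex_combinations} and the upper bound $\lambda_2 \leq 2 + \alpha$ from Lemma~\ref{lem:upper}. I focus on case~1, $m \leq x \leq m + 1/2$; case~2 reduces to it by an analogous one-index shift. First I normalize $\mathbf{u}(\lambda_2)$ so that $u_i(\lambda_2) \geq 0$ for $i \leq m$ and $u_i(\lambda_2) \leq 0$ for $i \geq m+1$, which is permitted because $\mathbf{H}_{\alpha U}(\mathbb{P}_N)$ is a Jacobi matrix and $\mathbf{u}(\lambda_2)$ has its unique generalized zero at index $m$. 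With this sign choice, the ratio inequality $\leq -1$ is equivalent to the magnitude inequality $|u_{m+k+\epsilon}(\lambda_2)| \geq |u_{m-1-k+\epsilon}(\lambda_2)|$ for a natural choice of $\epsilon$ (e.g., $\epsilon = x - m + 1/2$ in case~1, which makes the two interpolated indices mirror images through the node at $x$).

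The base case $k = 0$ packages the geometric hypothesis $x \leq m + 1/2$ into an analytic statement. Since the $\mathbf{u}$-line crosses zero at $x = m + u_m/(u_m + |u_{m+1}|)$, the assumption $x - m \leq 1/2$ is exactly $u_m \leq |u_{m+1}|$, which yields the base-case magnitude comparison directly after interpolation. For the inductive step I would apply the recurrence \cref{eqn:recurrence_cc} in two directions: forward on the right of the node,
\begin{equation*}
  -u_{m+k+1+\epsilon} = -c_R \, u_{m+k+\epsilon} + u_{m+k-1+\epsilon}, \quad c_R = 2 + \alpha(j_R - 1) - \lambda_2, \; j_R \in [m+k,\, m+k+1],
\end{equation*}
and backward on the left,
\begin{equation*}
  u_{m-2-k+\epsilon} = c_L \, u_{m-1-k+\epsilon} - u_{m-k+\epsilon}, \quad c_L = 2 + \alpha(j_L - 1) - \lambda_2, \; j_L \in [m-1-k,\, m-k].
\end{equation*}
For $k \geq 1$ the intervals $[m+k, m+k+1]$ and $[m-1-k, m-k]$ are disjoint, so $j_R > j_L$ and with $\alpha \geq 0$ we have $c_R \geq c_L$; combined with the sign pattern (negative on the right, positive on the left) and the two hypotheses for pairs $(k, k-1)$, the inequality propagates to step $k+1$.

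The main obstacle will be Lemma~\ref{lem:convex_combinations}'s coefficient $j_{i+\epsilon}$, which is a real number in $[i, i+1]$ determined implicitly by the ratio $u_{i+1}/u_i$ rather than by $\epsilon$ directly: one cannot simply set $j_{i+\epsilon} = i + \epsilon$, and tracking these symbols across the reflection through the node is the most delicate part of the induction. A secondary difficulty is the boundary step when $k = m-1$, where the left-side recurrence references a fictitious point and one must invoke the convention $u_0 = u_1$; verifying the induction carries through this boundary without degenerating requires a separate check. Finally, case~2 must be treated by re-indexing — the base case there uses $u_m > |u_{m+1}|$ and compares $|u_{m+1}|$ to $u_{m-1}$ through the original recurrence at $i = m$, which is where the second branch of the piecewise statement becomes necessary rather than being a pure relabelling of case~1.
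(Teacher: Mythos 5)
Your skeleton matches the paper's: induct on $k$, pair indices reflected through the node, and exploit the monotonicity of the potential via the disjointness of the intervals containing $j_R$ and $j_L$. The gap is in the inductive invariant, which as stated does not propagate. Writing $R_k$ and $L_k$ for the magnitudes of the terms to the right and left of the node, your two recurrences give
\begin{equation*}
R_{k+1} - L_{k+1} \;=\; c_R R_k - c_L L_k - \left(R_{k-1} - L_{k-1}\right),
\end{equation*}
and knowing only $R_k \geq L_k$, $R_{k-1} \geq L_{k-1}$ and $c_R \geq c_L$ is not enough: the subtracted difference $R_{k-1}-L_{k-1}$ is nonnegative and works against you, and $c_L$ need not be large (e.g.\ $c_R=c_L=1/2$, $R_k=1.1$, $L_k=1$, $R_{k-1}=0.5$, $L_{k-1}=0.1$ yields $R_{k+1}=0.05<0.4=L_{k+1}$, so the implication fails as a formal step). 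The paper instead carries the \emph{consecutive-ratio} inequality $u_{m+k+\epsilon}/u_{m+k+1+\epsilon} \leq u_{m-k+\epsilon}/u_{m-k-1+\epsilon}$ as the inductive hypothesis: dividing the recurrence at the mirrored indices by $u_{m+k+1+\epsilon}$ and $u_{m-k-1+\epsilon}$ and subtracting cancels the $(2-\lambda_2)$ terms, leaving exactly $\alpha(j_R-j_L)>0$; this propagates the ratio hypothesis and shows the cross-ratio $R_k/L_k$ is nondecreasing, hence $\geq 1$ from the base case. Note that because the $\lambda_2$ terms cancel in that subtraction, \Cref{lem:upper} is not actually needed for this lemma, contrary to your plan.

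Two smaller issues. Your choice $\epsilon = x-m+1/2$ does not make the base case immediate: the comparison $|u_{m+\epsilon}|\geq|u_{m-1+\epsilon}|$ then involves $u_{m-1}$ as well as $u_m,u_{m+1}$, so it is not equivalent to $u_m \leq |u_{m+1}|$. The paper fixes $\epsilon$ by the condition $u_{m+\epsilon}=-u_{m-1+\epsilon}$ (case $m\leq x\leq m+\tfrac12$) or $u_{m+\epsilon}=0$ (case $m+\tfrac12<x$), so that the base case is an identity, respectively a one-line consequence of the recurrence evaluated at the node. Finally, the two cases are indeed not a pure relabelling of one another and must each be run through the induction, as you suspect at the end of your write-up.
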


    \begin{proof}
      We proceed to prove this lemma by induction. First, consider the case that $m+1/2 \leq x < m+1$ for some $m\in \intset{1}{\lfloor N/2 \rfloor}$. For simplicity, let $\mathbf{u}(\lambda_2) = \mathbf{u}$. Then, there exists an $\epsilon$ such that $(1-\epsilon)u_{m}+\epsilon u_{m+1}=0$. So, from \cref{eqn:cc_1} we can consider the base case
      \begin{align}\label{eqn:base_1}
	u_{m+1+\epsilon} &= \epsilon \alpha u_{m+1} - u_{m-1+\epsilon} \\
	&< -u_{m-1+\epsilon}.
      \end{align}
      For the induction, rearrange \cref{eqn:recurrence_cc} for terms left and right of the node,
      \begin{equation}\label{eqn:temp6}
      	\frac{u_{m+k+2+\epsilon}+u_{m+k+\epsilon}}{u_{m+k+1+\epsilon}} - \frac{u_{m-k-2+\epsilon}+u_{m-k+\epsilon}}{u_{m-k-1+\epsilon}} = \alpha(j_{m+k+1+\epsilon}-j_{m-k-1+\epsilon}) > 0.
      \end{equation}
      Now assume
      \begin{equation}\label{eqn:assumption}
      	\frac{u_{m+k+\epsilon}}{u_{m+k+1+\epsilon}} \leq \frac{u_{m-k+\epsilon}}{u_{m-k-1+\epsilon}}
      \end{equation}
      thus, by \cref{eqn:temp6}
      \begin{equation}
      	\frac{u_{m+k+2+\epsilon}}{u_{m+k+1+\epsilon}} \geq \frac{u_{m-k-2+\epsilon}}{u_{m-k-1+\epsilon}}.
      \end{equation}
      Thus,
      \begin{equation}
      	\frac{u_{m+k+2+\epsilon}}{u_{m-k-2+\epsilon}} \leq \frac{u_{m+k+1+\epsilon}}{u_{m-k-1+\epsilon}}.
      \end{equation}
      Finally, taking $k=0$, \cref{eqn:base_1} satisfies \cref{eqn:assumption} and
      \begin{equation}
      	\frac{u_{m+k'+\epsilon}}{u_{m-k'+\epsilon}} \leq -1
      \end{equation}
      for all $k' \in \intset{1}{m}$.

      Next we consider the case that $m \leq x < m+1/2$. In this case, by \Cref{def:node} we can choose $\epsilon$ such that $u_{m+\epsilon} = -u_{m-1+\epsilon}$. Then,
      \begin{equation}\label{eqn:temp5}
      	\frac{u_{m+k+1+\epsilon}+u_{m+k-1+\epsilon}}{u_{m+k+\epsilon}} - \frac{u_{m-k-2+\epsilon}+u_{m-k+\epsilon}}{u_{m-k-1+\epsilon}} = \alpha(j_{m+k+\epsilon}-j_{m-k-1+\epsilon}) > 0.
      \end{equation}
      This time, assume
      \begin{equation}\label{eqn:assumption_2}
      	\frac{u_{m+k-1+\epsilon}}{u_{m+k+\epsilon}} \leq \frac{u_{m-k+\epsilon}}{u_{m-k-1+\epsilon}}
      \end{equation}
      then, by \cref{eqn:temp5}
      \begin{equation}
      	\frac{u_{m+k+1+\epsilon}}{u_{m+k+\epsilon}} \geq \frac{u_{m-k-2+\epsilon}}{u_{m-k-1+\epsilon}}.
      \end{equation}
      Hence,
      \begin{equation}
      	\frac{u_{m+k+1+\epsilon}}{u_{m-k-2+\epsilon}} \leq \frac{u_{m+k+\epsilon}}{u_{m-k-1+\epsilon}}.
      \end{equation}
      Again, taking $k=0$, we have that
      \begin{equation}
      	\frac{u_{m+k'+\epsilon}}{u_{m-1-k'+\epsilon}} \leq \frac{u_{m+\epsilon}}{u_{m-1+\epsilon}} \leq -1.
      \end{equation}
      for all $k' \in \intset{0}{m-1}$.
    \end{proof}

    Now, we recall a theorem due to Gantmakher and Kre\u{i}n:\cite{gantmakher2002oscillation}
    \begin{restatable}{theorem}{Gantmakher}\label{thm:Gantmakher}
      Let $\mathbf{u}(\mu;\alpha)$,$\mathbf{u}(\lambda;\beta)$ be two vectors of length $N$ satisfying \cref{eqn:recurrence} and with
      \begin{equation} \label{eqn:rest}
	      \Theta_{W,i}(\mu-\lambda;\alpha,\beta) \leq 0 \;\; \text{$\forall \; i \in \intset{m}{n}$}
      \end{equation}
      where $\Theta_{W,i}(\mu-\lambda;\alpha,\beta) < 0$ for at least some $i \in \intset{m}{n}$. We extend both vectors to length $N+2$ by including nodes at $u_0$ and $u_{N+1}$. (So long as \cref{eqn:recurrence} is satisfied, despite previous choices of $u_0$ and $u_{N+1}$, these points are always considered nodes.) Let $\eta \in [m-1,m), \xi \in (n,n+1]$ be two adjacent nodes of $\mathbf{u}(\lambda;\beta)$ with $m \leq n \in \intset{0}{N+1}$. Then there exists at least one node of $\mathbf{u}(\mu;\alpha)$ between $\eta$ and $\xi$.
    \end{restatable}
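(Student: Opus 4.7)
The plan is to argue by contradiction via the discrete Sturm-style Casoratian identity \cref{eqn:Casoratian}, summed between the two nodes $\eta$ and $\xi$. Suppose that $\mathbf{u}(\mu;\alpha)$ has no node in $(\eta,\xi)$. Since $\eta<m$ and $\xi>n$, this forces every entry $u_i(\mu;\alpha)$ for $i\in\intset{m}{n}$ to be nonzero and of a single sign (any vanishing entry or sign change between indices in this range would produce a node in $(\eta,\xi)$). Up to global sign changes on each of $\mathbf{u}(\mu;\alpha)$ and $\mathbf{u}(\lambda;\beta)$ (which do not affect the locations of nodes or the identity \cref{eqn:Casoratian}), I may assume
\[
u_i(\mu;\alpha)>0,\qquad u_i(\lambda;\beta)>0\qquad\text{for all }i\in\intset{m}{n},
\]
using the definition of a node to guarantee positivity of $\mathbf{u}(\lambda;\beta)$ strictly between its consecutive nodes $\eta$ and $\xi$.

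The first key step is to telescope \cref{eqn:Casoratian} from $i=m$ to $i=n$, writing $w_i=w_i\big(\mathbf{u}(\mu;\alpha),\mathbf{u}(\lambda;\beta)\big)$ and obtaining
\[
w_n - w_{m-1} \;=\; \sum_{i=m}^{n}\Theta_{W,i}(\mu-\lambda;\alpha,\beta)\,u_i(\mu;\alpha)\,u_i(\lambda;\beta).
\]
Under hypothesis \cref{eqn:rest} together with the strict negativity of $\Theta_{W,i}$ at some $i\in\intset{m}{n}$ and the positivity established above, every summand is nonpositive and at least one is strictly negative, so $w_n-w_{m-1}<0$.

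The second key step, which I expect to be the main obstacle, is to derive the opposite inequality $w_n\geq 0\geq w_{m-1}$ from the geometry of the nodes $\eta\in[m-1,m)$ and $\xi\in(n,n+1]$. At the left endpoint, $w_{m-1}=u_m(\mu;\alpha)u_{m-1}(\lambda;\beta)-u_{m-1}(\mu;\alpha)u_m(\lambda;\beta)$; I split into the cases $\eta=m-1$ (where $u_{m-1}(\lambda;\beta)=0$ and $u_{m-1}(\mu;\alpha)$ cannot be negative, or else $\mathbf{u}(\mu;\alpha)$ would have a generalized zero at $m-1$ producing a node in $[m-1,m)\cap(\eta,\xi)$) and $\eta\in(m-1,m)$ (where $u_{m-1}(\lambda;\beta)<0$ and, if $u_{m-1}(\mu;\alpha)<0$, the location $x_u\in[m-1,m)$ of the resulting node of $\mathbf{u}(\mu;\alpha)$ must satisfy $x_u\le\eta$, a linear-interpolation inequality that rearranges exactly into $w_{m-1}\le 0$). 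A symmetric argument at the right endpoint, comparing the interpolated position of any hypothetical node of $\mathbf{u}(\mu;\alpha)$ in $[n,n+1)$ against $\xi$, yields $w_n\ge 0$.

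Combining the two steps gives $0\le w_n-w_{m-1}<0$, the desired contradiction, so $\mathbf{u}(\mu;\alpha)$ must have a node strictly between $\eta$ and $\xi$. The whole argument is the discrete transcription of the standard Sturm comparison trick—expressing the monotonicity of the Wronskian between consecutive zeros—where the fiddly part is that discrete ``nodes'' can lie either at an index or strictly between two indices, so the endpoint sign analysis has to be carried out in both subcases and tied to the interpolation formula defining the node's location.
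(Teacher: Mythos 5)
Your proposal is correct and follows essentially the same route as the paper's proof in Appendix A: assume no node of $\mathbf{u}(\mu;\alpha)$ lies in $(\eta,\xi)$, fix signs so both vectors are positive on $\intset{m}{n}$, telescope the Casoratian identity \cref{eqn:Casoratian} to get $w_n - w_{m-1} = \sum_{i=m}^{n}\Theta_{W,i}\,u_i(\mu;\alpha)u_i(\lambda;\beta) < 0$, and contradict this with the endpoint inequalities $w_{m-1}\le 0 \le w_n$ obtained from the linear-interpolation conditions at $\eta$ and $\xi$ (your node-position comparison is exactly the paper's \cref{eqn:Appendix_conditional,eqn:Appendix_conditional2} rearranged).
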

    \begin{proof}
    	This fact is adapted directly from Gantmakher and Kre\u{i}n, with modifications made to allow for our parameterization. The argument is provided in detail in \Cref{app:Gantmakher} for the unfamiliar reader.
    \end{proof}
  \begin{restatable}{lemma}{Node}\label{lem:node_left}
      Let $\mathbf{H}_{\alpha U}(\mathbb{P}_N)$ be defined as in \Cref{lem:simplify}. $\mathbf{u}(\lambda_2)$ always has a node at or left of $x=(N+1)/2$.
  \end{restatable}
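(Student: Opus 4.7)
The plan is to argue by contradiction using \Cref{thm:Gantmakher} applied to $\mathbf{u}(\lambda_2)$ and its own left--right reflection. The $\alpha=0$ case is immediate: the explicit second eigenvector, $u_i^{(0)} \propto \cos(\pi(i-1/2)/N)$, has its $\mathbf{u}$-line crossing zero exactly at $(N+1)/2$, giving equality.

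Assume $\alpha > 0$ and, for contradiction, that the unique interior node $x$ of $\mathbf{u} := \mathbf{u}(\lambda_2)$ satisfies $x > (N+1)/2$. Define the reflected sequence $v_i := u_{N+1-i}$. Substituting into \cref{eqn:recurrence} shows that $\mathbf{v}$ satisfies the recurrence with the reflected potential $\tilde{W}_i := \alpha(N-i)$ at the same eigenvalue $\lambda_2$; its unique interior node therefore sits at $N+1-x < (N+1)/2$.

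To invoke \Cref{thm:Gantmakher}, which requires $\mu > \lambda$ strictly, I would shift $\mathbf{v}$'s potential and eigenvalue by a common $-\epsilon$ for arbitrarily small $\epsilon > 0$. This leaves the actual sequence and the recurrence unchanged but yields a formal pair $(\lambda_2 - \epsilon, \tilde{W} - \epsilon)$ with strictly smaller eigenvalue. A direct calculation gives
\[
\Theta_i = W_i - (\tilde{W}_i - \epsilon) - \epsilon = \alpha(2i - N - 1),
\]
which is strictly negative for every $i < (N+1)/2$. Taking the adjacent nodes of $\mathbf{v}$ to be $\eta = 0$ and $\xi = N+1-x$, the straddled integer interval lies inside $\{i : \Theta_i \leq 0\}$ with strict inequality somewhere, so \Cref{thm:Gantmakher} forces $\mathbf{u}$ to have a node strictly between $0$ and $N+1-x$. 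But $\mathbf{u}$'s only interior node is $x$, and $x > (N+1)/2 > N+1-x$ places it outside that interval, yielding the contradiction.

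The main obstacle is meeting the strict $\mu > \lambda$ hypothesis of \Cref{thm:Gantmakher}, since $\mathbf{u}$ and $\mathbf{v}$ share the eigenvalue $\lambda_2$ by construction; the common $\epsilon$-shift resolves this cleanly, because adding the same constant to $W$ and $\lambda$ leaves the recurrence invariant. A secondary point is that \Cref{thm:Gantmakher} treats the extended positions $0$ and $N+1$ as virtual nodes, which is exactly what legitimizes the pair $(0, N+1-x)$ as adjacent nodes of $\mathbf{v}$ and lets the boundary contribute to the oscillation bookkeeping.
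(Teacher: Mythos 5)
Your reflection argument is correct, and it takes a genuinely different route from the paper. The paper argues dynamically in $\alpha$: it uses \Cref{lem:ordering_2} to force $\langle \mathbf{U}\rangle_{\mathbf{u}(\lambda_2)} \geq (N-1)/2$ whenever the node sits exactly at $(N+1)/2$, then combines Hellmann--Feynman with \Cref{thm:Gantmakher} to show the node must move left under an infinitesimal increase of the slope, and finally anchors this at the $\alpha=0$ symmetric case --- implicitly relying on continuity of the node position in $\alpha$. You replace all of that bookkeeping with a single static comparison of $\mathbf{u}(\lambda_2)$ against its mirror image $v_i = u_{N+1-i}$, which indeed satisfies \cref{eqn:recurrence} with the reflected potential $\alpha(N-i)$, the same Neumann-type boundary identifications, and the same eigenvalue, and whose unique interior node is at $N+1-x$. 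Since every integer $i$ straddled by the adjacent nodes $0$ and $N+1-x$ of $\mathbf{v}$ satisfies $i < (N+1)/2$, you get $\Theta_i = \alpha(i-1)-\alpha(N-i) = \alpha(2i-N-1) < 0$ there, and \Cref{thm:Gantmakher} yields an interior node of $\mathbf{u}(\lambda_2)$ in $[1,\, N+1-x)$ --- the appendix proof produces a node in $[m,\xi)$, so the conventional boundary node at $0$ does not satisfy the conclusion vacuously --- contradicting uniqueness of the node at $x>(N+1)/2$. This buys independence from \Cref{lem:ordering_2} and from the continuity-in-$\alpha$ argument. Two small points to tidy: (i) the $\epsilon$-shift is unnecessary, since the restated \Cref{thm:Gantmakher} hypothesizes only the sign condition on $\Theta$ rather than $\mu>\lambda$ (though, as you note, the common shift is harmless); (ii) you should dispose of the degenerate case $N+1-x \leq 1$, where $\intset{m}{n}=\intset{1}{0}$ is empty --- this would require $u_N(\lambda_2)=0$, which together with $u_{N+1}(\lambda_2)=u_N(\lambda_2)$ and backward substitution in \cref{eqn:recurrence} forces $\mathbf{u}(\lambda_2)\equiv 0$, so it cannot occur.
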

    \begin{proof}
    We only want to consider variations with respect to one parameter, so we fix $\lambda=\mu_0,\alpha=\beta_0$. Then, we note that, by \cref{eqn:theta}, $\Theta_{U,i}$ is an increasing sequence in $i$. Now, we assume that there exists a node of $\mathbf{u}(\mu_0)$ at $x=(N+1)/2$. Next, at $\beta=\beta_0$ and $\mu=\mu_0$, $\Theta_{U,i}$ is identically $0$. Then,
      \begin{equation}\label{eq:theta_vary}
	\frac{d \Theta_{U,i}}{d\beta}\bigg\rvert_{\mu=\mu_0} = U_i - \frac{d \mu}{d \beta}\bigg\rvert_{\mu=\mu_0}= U_i - \langle \mathbf{U} \rangle_{\mathbf{u}(\mu_0)}
      \end{equation}
      Our assumption that $\mathbf{u}(\mu_0)$ at $x=(N+1)/2$ requires that $\epsilon=1$ in \Cref{lem:ordering_2}. Then, \Cref{lem:ordering_2} becomes an exact statement about the ordering of the components of $\mathbf{u}(\mu_0)$. Hence, $\langle \mathbf{U}\rangle_{\mathbf{u}(\mu_0)} \geq (N-1)/2$ and we have that
      \begin{align}\label{eq:node_theta}
	U_{\left\lfloor\frac{N+1}{2}\right\rfloor} - \left\langle \mathbf{U} \right\rangle_{\mathbf{u}(\mu_0)} &\leq U_{\left\lfloor\frac{N+1}{2}\right\rfloor} - \frac{N-1}{2} \\
	&= \left(\left\lfloor\frac{N+1}{2}\right\rfloor - 1\right) - \frac{N-1}{2} \\
	&\leq 0.
      \end{align}
      Further, in the same fashion
      \begin{equation}
      	U_{\left\lfloor\frac{N+1}{2}-1\right\rfloor} - \left\langle \mathbf{U} \right\rangle_{\mathbf{u}(\mu_0)} < 0.
      \end{equation}
      Hence, for some $\beta-\beta_0=\xi>0$ with $\xi$ sufficiently close to 0, $\Theta_{U,i}\leq0 \; \forall \; i \in \intset{1}{\lfloor(N+1)/2\rfloor}$ with at least some $i$ such that $\Theta_{U,i}<0$. Thus, at $\beta=\beta_0$ the node of $\mathbf{u}(\lambda_2)$ shifts left as $\beta$ increases. Note that if $\beta=\beta_0=0$, symmetry forces the node of the $\mathbf{u}(\mu)$-line to occur at $x=(N+1)/2$. Thus, there is initially a node at $(N+1)/2$ and whenever there is a node at $x=(N+1)/2$ it shifts left. Hence, there is always a node at or left of $(N+1)/2$.
    \end{proof}

    \begin{remark}
    	In fact, with some of the facts that follow, we demonstrate that the node shifts left with increasing $\alpha$. For proof, see \Cref{app:Sturm-Picone}.
    \end{remark}

    \Cref{lem:node_left} allows us to strengthen \Cref{lem:ordering_2} through the following fact:
        \begin{lemma}\label{lem:decreasing_2}
    	Let $\mathbf{H}_{\alpha U}(\mathbb{P}_N)$ be defined as in \Cref{lem:simplify}. Let $x$ represent the first node of the $\mathbf{u}(\lambda_2)$-line. Then, there exists a symmetric region $S=\intset{1}{m}$ about $x$ such that $\mathbf{u}(\lambda_2)$ is a decreasing sequence.
    \end{lemma}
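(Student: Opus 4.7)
The strategy is to establish strict decrease of $\mathbf{u}(\lambda_2)$ on the largest interval $\intset{1}{m'}$ that is symmetric about the first node $x$ and contained in $\intset{1}{N}$; such an interval is nontrivial because \Cref{lem:node_left} guarantees $x \leq (N+1)/2$, and its right endpoint $m'$ is approximately $2x - 1$. Throughout, I keep the notation $u_m(\lambda_2)$ for the generalized zero of $\mathbf{u}(\lambda_2)$ as in \Cref{lem:ordering_2}, so $m \neq m'$ in general. I split the argument at the sign change of $\mathbf{u}(\lambda_2)$: a left half on $\intset{1}{m}$ and a right half on $\intset{m+1}{m'}$.

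On the left half I would write $u_i(\lambda_2) = u_i(\lambda_1)\, r_i$ with $r_i := u_i(\lambda_2)/u_i(\lambda_1)$. The proof of \Cref{lem:intersection} establishes $\Delta r_i \leq 0$, and \Cref{lem:decreasing} establishes that $u_i(\lambda_1)$ is strictly positive and strictly decreasing. Both factors are positive on $\intset{1}{m}$, one strictly decreasing and the other non-increasing, so their product is strictly decreasing. Crossing the node is then immediate from $u_m(\lambda_2) \geq 0 > u_{m+1}(\lambda_2)$, which holds by the very definition of a generalized zero.

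For the right half I would invoke the inductive inequalities produced inside the proof of \Cref{lem:ordering_2}. In Case A (node in $[m, m+\tfrac{1}{2}]$), the identity \cref{eqn:temp6} combined with the strict decrease of positive $u_i(\lambda_2)$ to the left of the node---just established in Phase 1---gives $u_{m-k-2+\epsilon}(\lambda_2)/u_{m-k-1+\epsilon}(\lambda_2) > 1$, and hence the induction step yields $u_{m+k+1+\epsilon}(\lambda_2)/u_{m+k+\epsilon}(\lambda_2) > 1$. Since both numerator and denominator are negative to the right of the node, this forces $u_{m+k+1+\epsilon}(\lambda_2) < u_{m+k+\epsilon}(\lambda_2)$. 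Specializing the convex-combination relation at $\epsilon = 0$ and $\epsilon = 1$ then gives strict decrease of the original integer-indexed sequence on $\intset{m+1}{m'}$. Case B is treated identically via \cref{eqn:temp5}.

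The main obstacle I anticipate is the bookkeeping needed to convert the $(i+\epsilon)$-indexed monotonicity coming out of \Cref{lem:ordering_2} back into strict monotonicity of the integer-indexed sequence $(u_i(\lambda_2))$ on all of $\intset{1}{m'}$, and to verify in both cases that the symmetric right endpoint $m'$ sits within the range where the induction of \Cref{lem:ordering_2} applies. This last point is ultimately controlled by \Cref{lem:node_left}, which pins the node to the left half of the graph and thereby ensures the reflection of $\intset{1}{m}$ through $x$ stays inside the vertex set.
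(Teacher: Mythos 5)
Your left-half argument is correct and self-contained: writing $u_i(\lambda_2) = u_i(\lambda_1)\,r_i$ and combining $\Delta r_i \le 0$ (from the Casoratian computation in \Cref{lem:intersection}) with the strict monotonicity of $\mathbf{u}(\lambda_1)$ from \Cref{lem:decreasing} does give strict decrease of $\mathbf{u}(\lambda_2)$ up to and across the node (the case $\alpha=0$, where $\mathbf{u}(\lambda_1)$ is constant, must be handled separately but is explicit). The problem is the right half, and the obstacle you flag at the end is not bookkeeping --- it is the entire content of the lemma, and your proposed mechanism does not close it. The $\epsilon$ in \Cref{lem:ordering_2} is not a free parameter: it is pinned by the location of the node (chosen so that $u_{m+\epsilon}=0$, or so that $u_{m+\epsilon}=-u_{m-1+\epsilon}$), so you cannot ``specialize the convex-combination relation at $\epsilon=0$ and $\epsilon=1$.'' All your induction can deliver is, for that one fixed $\epsilon$, inequalities of the form $\epsilon\,\Delta u_{j+1}(\lambda_2) + (1-\epsilon)\,\Delta u_{j}(\lambda_2) < 0$. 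A negative convex combination of two consecutive differences does not force either difference to be negative: a priori one could have $\Delta u_j \ll 0$, $\Delta u_{j+1} > 0$, $\Delta u_{j+2} \ll 0$ with every interpolated combination still negative. Ruling that pattern out requires controlling where $\mathbf{u}(\lambda_2)$ first turns around to the right of the node, which is precisely what the lemma asserts.

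The paper closes this by a different mechanism. It takes $m$ to be the first index after the node with $\Delta u_m(\lambda_2)\Delta u_{m-1}(\lambda_2) < 0$, observes that $\left(u_i(\lambda_2)\right)_{i\in\intset{1}{m}}$ is then an eigenvector of a path Hamiltonian on $\graph{P}{m}$ whose potential agrees with $\alpha U$ except for a boundary correction $1 - u_{m+1}(\lambda_2)/u_m(\lambda_2)$ at the last site, and applies \Cref{lem:node_left} together with \Cref{thm:Gantmakher} to force the node of the truncated vector to lie left of the center of $\intset{1}{m}$; the symmetric region then sits inside $\intset{1}{m}$, where the sequence is decreasing by the very choice of $m$. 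If you want to salvage your route, you would need to supplement it with an argument of this type, or with a direct analysis of the sign pattern of $\Delta^2 u_i(\lambda_2) = \left(\alpha(i-1)-\lambda_2\right)u_i(\lambda_2)$ showing that $\Delta u_i(\lambda_2)$ cannot change sign twice in the relevant range. As written, the right half is a genuine gap.
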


    \begin{proof}
    	We begin by considering the first point after the node such that $\mathbf{u}(\lambda_2)$ is increasing and label it by $m$ such that $\Delta u_m (\lambda_2) \Delta u_{m-1}(\lambda_2) < 0$. Then,
    	\begin{equation}\label{eqn:decreasing_1}
    		u_{m+1}(\lambda_2) = \left(2 + \alpha (m-1) -\lambda_2\right) u_m(\lambda_2) - u_{m-1}(\lambda_2).
    	\end{equation}
    	Now, rearranging \cref{eqn:decreasing_1}
    	\begin{equation}\label{eqn:sub_path}
    		u_m(\lambda_2) = \left(2 + \left(1 - \frac{u_{m+1}(\lambda_2)}{u_{m}(\lambda_2)}\right) +\alpha (m-1) -\lambda_2\right) u_m(\lambda_2)-u_{m-1}(\lambda_2).
    	\end{equation}
    	Note that in \cref{eqn:sub_path}, because $u_m(\lambda_2)\leq u_{m+1}(\lambda_2)<0$, we know that $1 > u_{m+1}(\lambda_2)/u_m(\lambda_2)$ and thus $(u_i(\lambda_2))_{i\in\intset{1}{m}}$ is an eigenvector of  $\mathbf{H}_W(\graph{P}{m})$ where
    	\begin{equation}
	  W_i = \alpha U_i + \delta_{im}\left(1 - \frac{u_{m+1}(\lambda_2)}{u_{m}(\lambda_2)}\right).
	\end{equation}
	By \Cref{lem:node_left} the second eigenvector of $\mathbf{H}_{\alpha U}(\graph{P}{m})$ has a node left of center. Since $\lambda_2$ is greater than the second eigenvalue of $\mathbf{H}_{\alpha U}(\graph{P}{m})$ and we know that $\mathbf{W}$ is identical to $\mathbf{U}$ in all but the $m^{\text{th}}$ component, we have by \Cref{thm:Gantmakher} that $\left(u_i(\lambda_2)\right)_{i\in\intset{1}{m}}$ has a node left of center. Further, by our assumptions, $(u_i(\lambda_2))_{i\in\intset{1}{m}}$ is a decreasing sequence. Therefore, there exists a symmetric region $S=\intset{1}{m}$ about $x$ such that $\mathbf{u}(\lambda_2)$ is strictly decreasing.
    \end{proof}

    Using \Cref{lem:decreasing_2} we now prove a corollary to \Cref{lem:ordering_2} that holds regardless of whether the node falls directly on a vertex:
       \begin{corollary}\label{cor:ordering}
    	Let $\mathbf{H}_{\alpha U}(\mathbb{P}_N)$ and $x$ be defined as in \cref{lem:ordering_2}. Let $\mathbf{u}(\lambda_2)$ be a decreasing sequence. Then,
    	\begin{equation}
	-1 \geq \begin{cases}
 			\frac{u_{m+k+1}(\lambda_2)}{u_{m-k}(\lambda_2)} \; \; \text{for $m \leq x \leq m + \frac{1}{2} \forall k \in \intset{0}{m-1}$} \\
 			\frac{u_{m+2+k}(\lambda_2)}{u_{m-k}(\lambda_2)} \; \; \text{for $m + \frac{1}{2} < x \leq m + 1 \forall k \in \intset{1}{m}$}.
 		\end{cases}
 	\end{equation}
    \end{corollary}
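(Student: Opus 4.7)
The plan is to lift \Cref{lem:ordering_2} from its convex-combination form $u_{i+\epsilon}(\lambda_2)$ to the integer-indexed form $u_i(\lambda_2)$ by using the monotonicity provided by \Cref{lem:decreasing_2}. The observation that drives everything is that if $\mathbf{u}(\lambda_2)$ is decreasing, then the definition $u_{i+\epsilon}(\lambda_2) = \epsilon u_{i+1}(\lambda_2) + (1-\epsilon) u_i(\lambda_2)$ places each convex combination between its two neighbors, i.e.\ $u_{i+1}(\lambda_2) \leq u_{i+\epsilon}(\lambda_2) \leq u_i(\lambda_2)$. This sandwich converts any inequality between $u_{a+\epsilon}(\lambda_2)$ and $u_{b+\epsilon}(\lambda_2)$ into a corresponding inequality between integer-indexed components, provided we choose the bounding neighbors consistently.

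For the case $m \leq x \leq m + 1/2$, I would apply \Cref{lem:ordering_2} directly to get $u_{m+k+\epsilon}(\lambda_2) \leq -u_{m-1-k+\epsilon}(\lambda_2)$. Monotonicity yields the two sandwich inequalities $u_{m+k+1}(\lambda_2) \leq u_{m+k+\epsilon}(\lambda_2)$ and $u_{m-k}(\lambda_2) \leq u_{m-1-k+\epsilon}(\lambda_2)$, so negating the second and chaining with the first gives $u_{m+k+1}(\lambda_2) \leq u_{m+k+\epsilon}(\lambda_2) \leq -u_{m-1-k+\epsilon}(\lambda_2) \leq -u_{m-k}(\lambda_2)$. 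Dividing by $u_{m-k}(\lambda_2) > 0$, which is positive on the pre-node side for $k \in \intset{0}{m-1}$, yields the claimed ratio inequality.

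For the case $m + 1/2 < x \leq m + 1$, the key move is to apply \Cref{lem:ordering_2} at the shifted index $k' = k+1$, which produces $u_{m+k+1+\epsilon}(\lambda_2) \leq -u_{m-k-1+\epsilon}(\lambda_2)$ and is valid for $k+1 \in \intset{1}{m}$. The same sandwich then supplies $u_{m+k+2}(\lambda_2) \leq u_{m+k+1+\epsilon}(\lambda_2)$ and $u_{m-k}(\lambda_2) \leq u_{m-k-1+\epsilon}(\lambda_2)$, and the chain closes to give $u_{m+k+2}(\lambda_2) \leq -u_{m-k}(\lambda_2)$. This covers $k \in \intset{1}{m-1}$. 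The remaining endpoint $k=m$ falls outside the shifted range of \Cref{lem:ordering_2}, so I would treat it separately: take the $k = m-1$ conclusion $u_{2m+1}(\lambda_2) \leq -u_1(\lambda_2)$, apply one more decreasing step $u_{2m+2}(\lambda_2) \leq u_{2m+1}(\lambda_2)$, and use the boundary convention $u_0(\lambda_2) = u_1(\lambda_2)$ to finish.

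The main obstacle is recognizing that the shift $k \to k+1$ is essential in the second case. Using $k$ directly would sandwich $u_{m-k}(\lambda_2)$ on the \emph{wrong} side of $u_{m-k+\epsilon}(\lambda_2)$ (since $u_{m-k+\epsilon}(\lambda_2) \leq u_{m-k}(\lambda_2)$), so after negation the two monotonicity bounds would point in opposing directions and the chain would not close. Shifting the index by one repositions both convex combinations so that all the sandwich inequalities align and compose into the desired integer-component bound.
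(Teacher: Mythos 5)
Your proof is correct and follows essentially the same route as the paper: apply \Cref{lem:ordering_2} and then use the monotonicity of the decreasing sequence to sandwich each convex combination $u_{i+\epsilon}(\lambda_2)$ between its integer-indexed neighbors. You are in fact slightly more careful than the paper's own proof about the index shift needed in the $m+\tfrac{1}{2} < x \leq m+1$ case and about the $k=m$ endpoint involving the fictitious vertex $u_0 = u_1$, both of which the paper handles only implicitly.
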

    \begin{proof}
    	For the first case, assume that $m+1/2 \leq x < m+1$ for some $m\in \intset{1}{\lfloor N/2 \rfloor}$. Then, from \Cref{lem:ordering_2} we have that
    	\begin{equation}
    		-1 \geq \frac{u_{m+k+\epsilon}(\lambda_2)}{u_{m-k+\epsilon}(\lambda_2)} \; \; \text{for $m + \frac{1}{2} < x \leq m + 1 \forall k \in \intset{1}{m}$}.
    	\end{equation}
	Then, since $\mathbf{u}(\lambda_2)$ is decreasing, $u_{m+1+k} \leq u_{m+k+\epsilon}$ and also $u_{m-k+\epsilon} \geq u_{m+1-k}$. Thus,
	\begin{equation}
		-1 \geq \frac{u_{m+k+\epsilon}(\lambda_2)}{u_{m-k+\epsilon}(\lambda_2)} \geq \frac{u_{m+1+k}(\lambda_2)}{u_{m+1-k}(\lambda_2)}.
	\end{equation}

	Similarly, for the case that $m \leq x < m+1/2$, we have that
	\begin{equation}
    		-1 \geq \frac{u_{m+k+\epsilon}(\lambda_2)}{u_{m-1-k+\epsilon}(\lambda_2)} \; \; \text{for $m \leq x \leq m + \frac{1}{2} \forall k \in \intset{1}{m-1}$}.
    	\end{equation}
	In this case, we have that $u_{m-1-k+\epsilon} \geq u_{m-k}$. So that finally,
	\begin{equation}
    		-1 \geq \frac{u_{m+k+\epsilon}(\lambda_2)}{u_{m-1-k+\epsilon}(\lambda_2)} \geq \frac{u_{m+1+k}(\lambda_2)}{u_{m-k}(\lambda_2)}.
    	\end{equation}
    \end{proof}

    \begin{theorem}
      For $\mathbb{P}_N$,
      \begin{equation}
      	\Gamma_{W \in \mathcal{W}} \geq 2\left(1-\cos\left(\frac{\pi}{N}\right)\right).
      \end{equation}
    \end{theorem}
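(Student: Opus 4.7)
The plan is to reduce the theorem to a one-parameter monotonicity statement and then verify it via Hellman-Feynman. By \Cref{lem:lavine}, the gap for any convex potential is bounded below by the gap for some linear potential, and the reflection symmetry $i \leftrightarrow N+1-i$ of $\mathbb{P}_N$ (which sends $\alpha U$ to $-\alpha U$ up to an additive constant that cancels in the gap) reduces the linear case to slopes $\alpha \geq 0$. At $\alpha = 0$ the Hamiltonian is the bare path graph Laplacian, whose spectrum follows from solving \cref{eqn:recurrence} with the boundary conditions $u_0 = u_1$, $u_{N+1} = u_N$: one finds $\lambda_k = 2(1 - \cos((k-1)\pi/N))$, so $\Gamma(0) = 2(1 - \cos(\pi/N))$ is exactly the target. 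It therefore suffices to show $\frac{d\Gamma(\alpha)}{d\alpha} \geq 0$ for all $\alpha \geq 0$ and integrate.

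By \Cref{lem:simplify}, the case $u_1^2(\lambda_2) \leq u_1^2(\lambda_1)$ is immediate, so I would focus on $u_1^2(\lambda_2) > u_1^2(\lambda_1)$. In this complementary case the full structural toolkit applies: \Cref{lem:decreasing} makes $\mathbf{u}(\lambda_1)$ strictly decreasing, \Cref{lem:node_left} places the first node of $\mathbf{u}(\lambda_2)$ at or left of $(N+1)/2$, \Cref{lem:decreasing_2} extends the decreasing behavior of $|\mathbf{u}(\lambda_2)|$ to a symmetric region about its node, and \Cref{cor:ordering} provides magnitude comparisons of the form $u_{m+k+1}^2(\lambda_2) \geq u_{m-k}^2(\lambda_2)$, where $m$ is the integer part of the first node. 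The core calculation rewrites \cref{eqn:Hellman-Feynman} with a shift,
\begin{equation*}
\frac{d\Gamma}{d\alpha} = \sum_{i=1}^N \bigl(u_i^2(\lambda_2) - u_i^2(\lambda_1)\bigr)\bigl(i - m - \tfrac{1}{2}\bigr),
\end{equation*}
which is valid because both eigenvectors are normalized, and pairs $i \in \intset{1}{m}$ with $2m+1-i$. Each paired contribution equals
\begin{equation*}
\bigl(i - m - \tfrac{1}{2}\bigr)\Bigl[\bigl(u_i^2(\lambda_2) - u_{2m+1-i}^2(\lambda_2)\bigr) - \bigl(u_i^2(\lambda_1) - u_{2m+1-i}^2(\lambda_1)\bigr)\Bigr];
\end{equation*}
the prefactor is negative, the first bracket is non-positive by \Cref{cor:ordering}, and the second is non-negative by \Cref{lem:decreasing}, so the pair is non-negative.

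The main obstacle is the leftover tail $i \in \intset{2m+1}{N}$, which is unpaired and carries strictly positive weight $i - m - \tfrac{1}{2} \geq m + \tfrac{1}{2}$. To show its contribution is non-negative, my plan is to combine two ingredients. Evaluating \Cref{cor:ordering} at $k = m - 1$ yields $u_{2m}^2(\lambda_2) \geq u_1^2(\lambda_2) > u_1^2(\lambda_1) \geq u_{2m}^2(\lambda_1)$, so $\mu_i := u_i^2(\lambda_2) - u_i^2(\lambda_1)$ is positive at $i = 2m$, and by \Cref{lem:intersection} it has at most one additional sign change on $\intset{2m+1}{N}$; if none remains, the tail is term-by-term non-negative. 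Combined with normalization $\sum_i \mu_i = 0$ and the already-established non-negativity of the paired sum on $\intset{1}{2m}$, an Abel-summation argument using the monotonicity of $i - m - \tfrac{1}{2}$ on the tail then forces the total to be non-negative. The case $x \in (m+\tfrac{1}{2}, m+1]$ of \Cref{cor:ordering} is handled identically with $c = m+1$, with the three middle indices $\{m, m+1, m+2\}$ estimated directly using the near-node smallness of $\mathbf{u}(\lambda_2)$. I expect the Abel-summation step on the tail, reconciling the sign pattern of \Cref{lem:intersection} with the magnitude bounds from \Cref{cor:ordering}, to be the most delicate part.
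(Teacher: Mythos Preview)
Your approach is the same as the paper's: reduce to linear potentials via \Cref{lem:lavine}, to $\alpha\geq 0$ by reflection, dispatch the case $u_1^2(\lambda_2)\leq u_1^2(\lambda_1)$ via \Cref{lem:simplify}, and in the remaining case pair indices symmetrically about the node using \Cref{lem:decreasing} and \Cref{cor:ordering}. The tail, however, is simpler than you fear and needs no Abel summation. Having established $u_{2m}^2(\lambda_2)>u_{2m}^2(\lambda_1)$, note that $u_{2m}(\lambda_2)<0<u_{2m}(\lambda_1)$, so the ratio $u_{2m}(\lambda_2)/u_{2m}(\lambda_1)$ lies strictly below $-1$; but the proof of \Cref{lem:intersection} shows that $u_i(\lambda_2)/u_i(\lambda_1)$ is monotone non-increasing in $i$, so it stays below $-1$ for every $i\geq 2m$. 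Hence $u_i^2(\lambda_2)>u_i^2(\lambda_1)$ term by term on the entire tail, the ``additional sign change'' you worry about cannot occur, and each tail term already contributes non-negatively with weight $i-c>0$. This termwise positivity on the complement of the symmetric region is exactly what the paper invokes (tersely) when it says \Cref{lem:ordering_2} and \Cref{lem:decreasing} force $u_k^2(\lambda_2)>u_k^2(\lambda_1)$ outside $S_m$.
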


    \begin{proof}
      From \Cref{thm:Hellman-Feynman} we know that so long as $\langle \mathbf{U} \rangle_{\mathbf{u}(\lambda_2)} - \langle \mathbf{U} \rangle_{\mathbf{u}(\lambda_1)} > 0$, the gap is increasing.

      Consider a set of indices $S_m$ symmetric about $m$, the index corresponding to the first generalized zero $u_m(\lambda_2)$ of $\mathbf{u}(\lambda_2)$. Now, define $\mathbf{v}(\lambda_i) = \big(u(\lambda_i)\big)_{i\in S_m}$. From \Cref{lem:decreasing,lem:ordering_2} we know that
      \begin{equation}
      	\big\langle \mathbf{U} \big\rangle_{\mathbf{v}(\lambda_2)} \geq \big\langle \mathbf{U} \big\rangle_{\mathbf{v}(\lambda_1)}.
      \end{equation}
      where we restrict $\mathbf{U}$ to the same number of terms as $\mathbf{v}(\lambda_2)$.

      By \Cref{lem:node_left} we know that the node of the $\mathbf{u}(\lambda_2)$-line must occur at or before the midpoint of $\mathbf{u}(\lambda_2)$. Thus, $S_m$ can be taken as $S_m = \intset{1}{2m+1}$. By \Cref{lem:simplify} we restrict ourselves to the case that $u_1^2(\lambda_2) > u_1^2(\lambda_1)$. With this restriction, \Cref{lem:ordering_2,lem:decreasing} insist that $u_k^2(\lambda_2) > u_k^2(\lambda_1) \; \forall k \in \intset{1}{N}/S_m$. It is then obvious that
      \begin{equation}
      	\big\langle \mathbf{U} \big\rangle_{\mathbf{u}(\lambda_2)} \geq \big\langle \mathbf{U} \big\rangle_{\mathbf{u}(\lambda_1)}
      \end{equation}
      for $\alpha \geq 0$. Thus, we know that $\Gamma$ is at a minimum for $\alpha=0$. Now, at $\alpha=0$ we find that $\lambda_1=0$ and thus $\Gamma = \lambda_2$. Hence,
      \begin{equation}
	\Gamma_{W \in \mathcal{W}} \geq 2\left(1-\cos\left(\frac{\pi}{N}\right)\right).
      \end{equation}
    \end{proof}

    \section{The Hypercube Graph}
    In this section we find a tight lower bound for the gap for Hamming-symmetric convex potentials on the N-dimensional hypercube graph $\graph{H}{2^N} = (V,E)$. To define $\graph{H}{2^N}$, we identify with each vertex $V_i \in V$ a unique vector $\mathbf{b}_i \in \{0,1\}^N$. Then, we choose $E = \{(V_i,V_j)\; \big\rvert \;|\mathbf{v}_i-\mathbf{v}_j| = 1 \}$, where $|\cdot|$ here denotes the 1-norm. (In the language of computer science, $\mathbf{b}_i, \mathbf{b}_j$ are bit-strings and $|\mathbf{b}_i - \mathbf{b}_j|$ is their Hamming distance.) As in (\ref{eqn:hw}) the Schr\"{o}dinger operator includes a potential term $\mathbf{W}$. Thus, an eigenvector $u(\lambda)$ of eigenvalue $\lambda$ satisfies
    \begin{equation}\label{eqn:recurrence_hyper_prior}
    	(N + W_i - \lambda)u_i(\lambda) = \sum_{(V_i,V_j)\in E} u_j(\lambda) \;\; \text{for $V_i \in V$.}
    \end{equation}
    Here we restrict our attention to the case that the potential depends only on Hamming distance from the vertex of minimum potential. We can label this minimum by the all zeros string, and therefore $W_i = W_{|b_i|}$. In this case, the set of Hamming-symmetric vectors are an invariant subspace of the Schr\"{o}dinger operator.
    \begin{remark}
    In the language of quantum-mechanics, this is the space spanned by the $N+1$ state vectors that are uniform superpositions over bit-strings of a given Hamming weight. By Schr{\"o}dinger's equation, no time-evolution induced by a (possibly time-dependent) Hamming symmetric Hamiltonian will ever drive transitions out of this subspace. For many cases, it is only the gap within this subspace that is of interest.
    \end{remark}
    Below, we will bound the gap within the Hamming-symmetric subspace. Here, the (normalized) uniform superpositions over bit-strings of each Hamming weight form an orthonormal basis for this subspace.    Given a state-vector $\mathbf{u}(\lambda)$, let $v_m(\lambda)$ denote the inner product of $\mathbf{u}(\lambda)$ with the Hamming-weight-$m$ basis vector. That is,
    \begin{equation}
    	v_m(\lambda) = \frac{1}{\sqrt{\binom{N}{m}}} \sum_{|b_i|=m} u_i(\lambda).
    \end{equation}

    Because $\mathbf{u}(\lambda)$ lies within the symmetric subspace, this corresponds to rewriting the vector in a different basis. For arbitrary vectors in the full Hilbert space, this would be a projection onto the symmetric subspace.

Then, with a bit of work, \cref{eqn:recurrence_hyper_prior} becomes
    \begin{equation}\label{eqn:recurrence_hyper}
    	(N+W_m-\lambda)v_m(\lambda) = h(m-1)v_{m-1}(\lambda) + h(m)v_{m+1}(\lambda)
    \end{equation}
    where
    \begin{equation}
    	h(m) = \sqrt{(m+1)(N-m)}.
    \end{equation}

    Now, we know that \cref{eqn:recurrence_hyper} is the recurrence relation satisfied by some Jacobi matrix $\mathbf{J}$ with eigenvalue $\lambda \in (\lambda_i)_{i=1}^N$. In keeping with our typical ordering, we choose $\lambda_1 < \lambda_2 < \dots <\lambda_N$. Further, since we know that we can shift the diagonal by any $c \mathbbm{1}, c \in \mathbb{R}$ without altering the gap, we instead consider $\mathbf{J}\rightarrow \mathbf{J} -N\mathbbm{1}$ which satisfies the recurrence relation
    \begin{equation}\label{eqn:recurrence_hyper2}
    	(W_m-\lambda)v_m(\lambda) = h(m-1)v_{m-1}(\lambda) + h(m)v_{m+1}(\lambda)
    \end{equation}
    without any loss of generality.

    \begin{remark}
    	The reader should note that unlike \Cref{sec:path}, $v_0(\lambda)$ is not a boundary term, but $v_{N+1}(\lambda)$ is. This inconsistency is an artifact of labeling vertices by their Hamming weights as there are vertices with Hamming weight $0$, but none with Hamming weight $N+1$. The boundary terms will be defined where appropriate.
    \end{remark}

    Now, we define the transformation,
    \begin{equation}\label{eqn:trans}
    	v'_m(\lambda) \stackrel{\text{def}}{=} f(m) v_m(\lambda)
    \end{equation}
    where $f(m)$ is given by
    \begin{equation}
      f(m) \stackrel{\text{def}}{=} \begin{cases}
      f_0 \prod_{\stackrel{j\in\text{Odd}}{0<j<m}}\frac{h(j-1)}{h(j)} & \text{if $m$ is even} \\
      f_1 \prod_{\stackrel{j\in\text{Even}}{0\leq j<m}}\frac{h(j-1)}{h(j)} & \text{if $m$ is odd} \\
      \end{cases}
    \end{equation}
    and we choose,
    \begin{equation}
    	\frac{f_1}{f_0}= \frac{\sqrt{N}\;\Gamma(N)}{2^{N-1}(\Gamma(\frac{N+1}{2}))^2}
    \end{equation}
    where the $\Gamma$ above represents the gamma function, not the gap.

    With this transformation, we have from \cref{eqn:recurrence_hyper2,eqn:trans}
    \begin{equation}\label{eqn:recurrence_hyper3}
    	v'_{m-1}(\lambda)-2v'_m(\lambda)+v'_{m+1}(\lambda) = \frac{f(m)}{h(m)}{f(m+1)}(W_m - q_m -\lambda)v'_m(\lambda)
    \end{equation}
     where
    \begin{equation}
    	q_m \stackrel{\text{def}}{=} \frac{2 h(m)f(m+1)}{f(m)}.
    \end{equation}
    Here, our choice of $q_m$ (alternatively our choice of $f_1/f_0$) maintains symmetry and consistency across various choices of $N$.

    Now we consider the Casoratian sequence corresponding to \cref{eqn:Casoratian}.
    \begin{equation}
    	w_i(\mathbf{v'}(\lambda_2),\mathbf{v'}(\lambda_1)) = v'_{i+1}(\lambda_2)v'_i(\lambda_1) - v'_{i+1}(\lambda_1)v'_i(\lambda_2)
    \end{equation}

    For consistency, we choose $v'_{-1}(\cdot)=v'_{N+1}(\cdot)=0$ and we get that $w_{-1}=w_{N+1}=0$. Then, from \cref{eqn:recurrence_hyper3}, similarly to \cref{eqn:Casoratian_diff}, we have that
    \begin{equation}
    	\Delta w_{k-1} = -\Gamma\frac{f(k)}{h(k)f(k+1)}v'_k(\lambda_2)v'_k(\lambda_1).
    \end{equation}
    We note that since $\mathbf{v}(\cdot)$ are the eigenvectors of a Jacobi matrix, $\mathbf{v}(\lambda_1)$ has no generalized zeros and $\mathbf{v}(\lambda_2)$ has precisely one generalized zero. Then, since $f(m)>0 \; \forall m \; \in \; \intset{0}{N}$ we know that $\mathbf{v'}(\lambda_1)$ has no zeros and $\mathbf{v'}(\lambda_2)$ has precisely one. Thus, labeling the generalized zero of $\mathbf{v'}(\lambda_2)$ by $n$, we have that
    \begin{eqnarray}
    	w_{m\leq n} &=& \sum_{k=-1}^{m-1}\Delta w_i						\\
    	&=&-\Gamma \sum_{k=-1}^{k-1}\frac{f(k)}{h(k)f(k+1)}v'_k(\lambda_2)v'_k(\lambda_1)	\\
    	&<& 0
    \end{eqnarray}
    and similarly
    \begin{eqnarray}
    	w_{m > n} &=& -\sum_{k=m}^{N}\Delta w_i						\\
    	&=& \Gamma \sum_{k=m}^{N}\frac{f(k)}{h(k)f(k+1)}v'_k(\lambda_2)v'_k(\lambda_1)	\\
    	&<& 0
    \end{eqnarray}
    so that $w_i < 0 \; \forall \; i \in \intset{0}{N}$. As we have already seen in \cref{eqn:regions}, this guarantees that $\left({v'}_i^2(\lambda_2)-{v'}_i^2(\lambda_1)\right)_{i=0}^{N}$ has at most two generalized zeros. Now, because we have that
    \begin{equation}
    	{v'}_i^2(\lambda_2)-{v'}_i^2(\lambda_1) = f(k)^2\left(v_i^2(\lambda_2)-v_i^2(\lambda_1)\right)
    \end{equation}
    ${v'}_i^2(\lambda_2)-{v'}_i^2(\lambda_1)$ has the same sign as $v_i^2(\lambda_2)-v_i^2(\lambda_1)$ and thus, $\left(v_i^2(\lambda_2)-v_i^2(\lambda_1)\right)_{i=0}^{N}$ has at most two generalized zeros. That $\mathbf{v}(\lambda_2)$ is orthogonal to $\mathbf{v}(\lambda_1)$ guarantees that it has at least one generalized zero.

    At this point, we have satisfied the necessary conditions to apply an obvious analogue of \Cref{lem:lavine}:
    \begin{lemma}\label{lem:lavine2}
	Let $\mathcal{W}$ be the set of convex potentials and $\mathcal{L}\subseteq{\mathcal{W}}$ be the set of linear potentials. Let $\mathbf{u}(\lambda_1)$, $\mathbf{u}(\lambda_2)$ satisfying \cref{eqn:regions} be real-valued eigenvectors corresponding to the two lowest eigenvalues of some matrix $\mathbf{H}_{W}(\graph{P}{N}) + \mathbf{M}$ with real eigenvalues, where $\mathbf{M}$ is an arbitrary $N\times N$ matrix independent of $W$. Then, $\forall \;W\in\mathcal{W} \; \exists L \in \mathcal{L} \;\rvert\; \Gamma(\mathbf{H}_{W}+\mathbf{M}) \geq \Gamma(\mathbf{H}_{L}+\mathbf{M})$.
  \end{lemma}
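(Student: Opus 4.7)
The plan is to mimic the proof of \Cref{lem:lavine} essentially verbatim, exploiting the fact that $\mathbf{M}$, being independent of $W$, makes no contribution to the Hellman--Feynman derivative of the gap under any deformation of the potential.

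First I would identify $W$ with a convex function $w:\mathbb{R}\to\mathbb{R}$ satisfying $w(i)=W(V_i)$, and define the chord $l_w$ through the two points $(m,w(m))$ and $(n,w(n))$, where $m<n$ are the crossover indices supplied by the hypothesis that $\mathbf{u}(\lambda_1),\mathbf{u}(\lambda_2)$ satisfy \cref{eqn:regions}. Convexity forces $w(i)\le l_w(i)$ on $\intset{m}{n}$ and $w(i)\ge l_w(i)$ elsewhere in $\intset{1}{N}$, which is precisely the opposite sign pattern to $u_i^2(\lambda_2)-u_i^2(\lambda_1)$ guaranteed by \cref{eqn:regions}. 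Summing the termwise nonnegative product yields the analogue of \cref{eq:39},
\begin{equation*}
\langle \mathbf{W}-\mathbf{L}_W \rangle_{\mathbf{u}(\lambda_2)} - \langle \mathbf{W}-\mathbf{L}_W \rangle_{\mathbf{u}(\lambda_1)} \ge 0.
\end{equation*}

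Next I would introduce the one-parameter family $\mathbf{H}_{W(\alpha)}(\graph{P}{N})+\mathbf{M}$, where $w(i;\alpha)$ evolves according to $dw/d\alpha = l_{w(\cdot;\alpha)}-w(\cdot;\alpha)$ with $w(i;0)=w(i)$. Because $d\mathbf{M}/d\alpha=0$, \Cref{thm:Hellman-Feynman} produces
\begin{equation*}
\frac{d\Gamma}{d\alpha} = \langle \mathbf{L}_{W(\alpha)}-\mathbf{W}(\alpha)\rangle_{\mathbf{u}(\lambda_2;\alpha)} - \langle \mathbf{L}_{W(\alpha)}-\mathbf{W}(\alpha)\rangle_{\mathbf{u}(\lambda_1;\alpha)} \le 0,
\end{equation*}
so the gap is nonincreasing along the flow. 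Integrating the ODE yields an expression of exactly the form \cref{eqn:converge}, whose initial-condition contribution decays like $e^{-\alpha}$ while the integral expresses $w(\cdot;\alpha)$ as a mixture of linear interpolants; letting $\alpha\to\infty$ therefore drives $w(\cdot;\alpha)$ into $\mathcal{L}$, producing the desired $L$.

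The main technical obstacle is that the sign-structure assumption \cref{eqn:regions} and the non-degeneracy of $\lambda_1,\lambda_2$ must be preserved along the \emph{entire} flow, not merely at $\alpha=0$, in order for \Cref{thm:Hellman-Feynman} and the inequality above to apply at each step. In \Cref{lem:lavine} this was automatic via \Cref{lem:intersection}; here the corresponding three-region property was the exact content of the Casoratian analysis carried out immediately before the lemma's statement, and since $w(\cdot;\alpha)$ remains convex along the flow (as a convex combination of convex functions) while $\mathbf{M}$ is held fixed, that same reasoning should go through uniformly in $\alpha$. The assumption that $\mathbf{H}_{W}(\graph{P}{N})+\mathbf{M}$ has real spectrum and non-degenerate two lowest eigenvalues along the flow is precisely what is needed to make this inheritance rigorous.
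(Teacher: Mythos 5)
Your proposal is correct and matches the paper's (much terser) argument: the paper simply observes that the proof of \Cref{lem:lavine} depends only on the variational term $\mathbf{W}(\alpha)$, so once \cref{eqn:regions} holds the argument carries over verbatim with $\mathbf{M}$ held fixed. Your added remark that \cref{eqn:regions} and non-degeneracy must persist along the whole flow is a legitimate point the paper leaves implicit, and your appeal to the Casoratian analysis preceding the lemma is exactly how the paper supplies that hypothesis in the hypercube application.
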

  \begin{proof}
  	We note that because \Cref{lem:lavine} depends only upon the variational term $\mathbf{W}(\alpha)$, when some matrix some matrix $\mathbf{H}_{W}(\graph{P}{N}) + \mathbf{M}$ satisfies \cref{eqn:regions}, the proof is identical to that of \Cref{lem:lavine}. Therefore, this proof is omitted.
  \end{proof}
   The reduced Hamming-symmetric matrix corresponding to \cref{eqn:recurrence_hyper} is equivalent to $\mathbf{H}_{W}(\graph{P}{N+1}) + \mathbf{M}$ for some choice of $\mathbf{M}$. Thus, by \Cref{lem:lavine2} it has a lower bound for a linear, Hamming-symmetric potential.  Now, for such a linear potential we can consider $\alpha L_i = \alpha (i-N/2)$. Here, the eigenvalues are exactly solvable and given by
    \begin{equation}
    	\lambda_k = k\sqrt{4+\alpha^2} \;\;\; \text{$\forall k \in \{-N/2,-(N-1)/2,\dots,(N-1)/2,N/2\}$.}
    \end{equation}

    Then,
    \begin{equation}
    	\Gamma_{\alpha L \in \mathcal{L}} = \sqrt{4+\alpha^2}
    \end{equation}
    which is clearly minimized for $\alpha=0$. Thus, for convex, Hamming-symmetric potentials on the hypercube
    \begin{equation}
    	\Gamma \geq 2
    \end{equation}
    within the Hamming-symmetric subspace.

    \section{Acknowledgements}
    This work was supported in part by the joint center for Quantum Information and Computer Science (QuICS), a collaboration between the University of Maryland Institute for Advanced Computer Studies (UMIACS) and the NIST Information Technology Laboratory (ITL). Portions of this paper are a contribution of NIST, an agency of the US government and are not subject to US copyright.

\bibliography{Gap}

\begin{thebibliography}{10}

\bibitem{andrews2011proof}
Ben Andrews and Julie Clutterbuck.
\newblock Proof of the fundamental gap conjecture.
\newblock {\em Journal of the American Mathematical Society}, 24(3):899--916,
  2011.

\bibitem{ashbaugh1989optimal}
Mark~S. Ashbaugh and Rafael~D. Benguria.
\newblock Optimal lower bound for the gap between the first two eigenvalues of
  one-dimensional {S}chr{\"o}dinger operators with symmetric single-well
  potentials.
\newblock {\em Proceedings of the American Mathematical Society},
  105(2):419--424, 1989.

\bibitem{ashbaugh1990some}
Mark~S. Ashbaugh and Rafael~D. Benguria.
\newblock Some eigenvalue inequalities for a class of {J}acobi matrices.
\newblock {\em Linear Algebra and its Applications}, 136:215--234, 1990.

\bibitem{FGG02}
Edward Farhi, Jeffrey Goldstone, and Sam Gutmann.
\newblock Quantum adiabatic evolution algorithms versus simulated annealing.
\newblock {\em arXiv:quant-ph/0201031}, 2002.

\bibitem{Farhi_science}
Edward Farhi, Jeffrey Goldstone, Sam Gutmann, Joshua Lapan, Andrew Lundgren,
  and Daniel Preda.
\newblock A quantum adiabatic evolution algorithm applied to random instances
  of an {NP}-complete problem.
\newblock {\em Science}, 292(5516):472--475, 2001.
\newblock arXiv:quant-ph/0104129.

\bibitem{gantmakher2002oscillation}
Feliks~Ruvimovich Gantmakher and Mark~Grigor'evich Kre\u{i}n.
\newblock {\em Oscillation matrices and kernels and small vibrations of
  mechanical systems}.
\newblock Number 345. American Mathematical Soc., 2002.

\bibitem{Hwang2004}
Suk-Geun Hwang.
\newblock {C}auchy's interlace theorem for eigenvalues of {H}ermitian matrices.
\newblock {\em The American Mathematical Monthly}, 111(2):pp. 157--159, 2004.

\bibitem{Lavine1994}
Richard Lavine.
\newblock The eigenvalue gap for one-dimensional convex potentials.
\newblock {\em Proceedings of the American Mathematical Society}, 121(3):pp.
  815--821, 1994.

\bibitem{Note1}
It should be noted that the theorem is typically stated for a Hermitian
  operator $\protect \mathbf {H}_\alpha $ with eigenvalues $\lambda _0 <
  \lambda _1 < \protect \dots < \lambda _N$. Care must be taken in the
  application of this theorem when considering degenerate eigenvalues \cite
  {Vatsya2004,Zhang2004} which can occur in broader classes of graph
  Laplacians. For instance, the ring graph with constant potential has
  degeneracies. Nonetheless, since the cases we consider in this paper are
  non-degenerate, we can use this theorem in its above-stated form.

\bibitem{Payne1960}
L.E. Payne and H.F. Weinberger.
\newblock An optimal {P}oincaré inequality for convex domains.
\newblock {\em Archive for Rational Mechanics and Analysis}, 5(1):286--292,
  1960.

\bibitem{R04}
Ben Reichardt.
\newblock The quantum adiabatic optimization algorithm and local minima.
\newblock In {\em Proceedings of STOC '04}, pages 502--510, 2004.

\bibitem{DMV01}
Wim van Dam, Michele Mosca, and Umesh Vazirani.
\newblock How powerful is adiabatic quantum computation?
\newblock In {\em Proceedings of FOCS '01}, pages 279--287, 2001.
\newblock arXiv:quant-ph/0206003.

\bibitem{Vatsya2004}
S.~R. Vatsya.
\newblock Comment on ``{B}reakdown of the {H}ellmann-{F}eynman theorem:
  Degeneracy is the key''.
\newblock {\em Phys. Rev. B}, 69:037102, Jan 2004.

\bibitem{Yu1986}
Qi~Huang Yu and Jia~Qing Zhong.
\newblock Lower bounds of the gap between the first and second eigenvalues of
  the {S}chr{\"o}dinger operator.
\newblock {\em Transactions of the American Mathematical Society},
  294(1):341--349, 1986.

\bibitem{Zhang2004}
G.~P. Zhang and Thomas~F. George.
\newblock Extended {H}ellmann-{F}eynman theorem for degenerate eigenstates.
\newblock {\em Phys. Rev. B}, 69:167102, Apr 2004.

\end{thebibliography}

 \appendix

\section{A sufficient condition for some node of $\mathbf{u}(\mu)$ to separate adjacent nodes of $\mathbf{u}(\lambda)$}\label[secinapp]{app:Gantmakher}

In this section we give proof of \Cref{thm:Gantmakher}. The following proof is adapted directly from Gantmakher and Kre\u{i}n\cite{gantmakher2002oscillation}, however we consider vectors $\mathbf{u}(\lambda)$ and $\mathbf{u}(\mu)$ which need not be eigenvectors of the same matrix. Rather, we require only that these vectors satisfy \cref{eqn:recurrence}.

    \Gantmakher*

    \begin{proof}
      First, we consider the extension of vectors $\mathbf{u}(\mu;\alpha)$,$\mathbf{u}(\lambda;\beta)$. From \cref{eqn:recurrence}  we take $W_1 \mapsto W_1+1$ and, for $\mathbf{u}(\lambda;\beta)$ get
      \begin{equation}\label{eqn:recurrence_node}
      	\left(1+W_1 - \lambda\right)u_1(\lambda) = u_{2}(\lambda) + u_{0}(\lambda)
      \end{equation}
      and similarly for $\mathbf{u}(\mu;\alpha)$. Here, to maintain consistency between \cref{eqn:recurrence,eqn:recurrence_node} we require that in \cref{eqn:recurrence_node} $u_0=0$. Thus, $u_0$ is a node of $\mathbf{u}$. We similarly treat $u_{N+1}$ as a node. Further, since we have shifted $W_1,W_{N}$ by constants, \cref{eqn:theta} is unaltered. Hence, \cref{eqn:Casoratian} is unchanged and we can proceed with the proof.

      Let $\eta \in [m-1,m)$ and $\xi \in (n,n+1]$ be successive nodes of $\mathbf{u}(\lambda;\beta)$ with $\eta < \xi$. Without loss of generality, we assume that $u_i(\lambda;\beta) > 0 \; \forall \; i \in \intset{m}{n}$. Then,

      \begin{equation}\label{eqn:Appendix_conditional}
	\begin{cases}
	      (m-\eta)u_{m-1}(\lambda;\beta) + (\eta - m + 1)u_{m}(\lambda;\beta) &= 0 \\
	      (n+1-\xi)u_n(\lambda;\beta) + (\xi-n)u_{n+1}(\lambda;\beta) &= 0
	\end{cases}
      \end{equation}

      Now, again without loss of generality, we assume that $u_i(\mu;\alpha) > 0 \; \forall \; i \in \intset{m}{n}$. Hence, if $\mathbf{u}(\mu;\alpha)$ also has no nodes in $(m,n)$, we get that

      \begin{equation}\label{eqn:Appendix_conditional2}
	\begin{cases}
	      (m-\eta)u_{m-1}(\mu;\alpha) + (\eta-m+1)u_{m}(\mu;\alpha) &\geq 0 \\
	      (n+1-\xi)u_n(\mu;\alpha) + (\xi-n)u_{n+1}(\mu;\alpha) &\geq 0
	\end{cases}
      \end{equation}

      Combining \cref{eqn:Appendix_conditional,eqn:Appendix_conditional2} yields the inequalities
      \begin{align}
	      \label{eqn:Appendix_inequalities1} w_{m-1}\big(\mathbf{u}(\mu;\beta),\mathbf{u}(\lambda;\alpha)\big) &\leq 0 \\
	      \label{eqn:Appendix_inequalities2} w_{n}\big(\mathbf{u}(\mu;\beta),\mathbf{u}(\lambda;\alpha)\big) &\geq 0
      \end{align}

      Recall from \cref{eqn:Casoratian} that
      \begin{equation}
	      \Delta w_{i-1}\big(\mathbf{u}(\mu;\beta),\mathbf{u}(\lambda;\alpha)\big) = \Theta_{W,i}(\mu-\lambda;\beta,\alpha) u_{i}(\mu;\beta)u_i(\lambda;\alpha)
      \end{equation}
      where by summing both sides,
      \begin{equation}\label{eq:Appendix_Casoratian}
	      w_{n}\big(\mathbf{u}(\mu;\beta),\mathbf{u}(\lambda;\alpha)\big)-w_{m-1}\big(\mathbf{u}(\mu;\beta),\mathbf{u}(\lambda;\alpha)\big) = \sum_{i=m}^{n}\Theta_{W,i}(\mu-\lambda;\beta,\alpha) u_{i}(\mu;\beta)u_i(\lambda;\alpha)
      \end{equation}
      Thus, by \cref{eqn:Appendix_inequalities1,eqn:Appendix_inequalities2} we have that the left-hand side of \cref{eq:Appendix_Casoratian} is non-negative. Then, by our choice of $u_i(\lambda;\alpha),u_i(\mu;\beta)>0 \; \forall \; i \in \intset{m}{n}$, we see that if $\Theta_{W,i}(\mu-\lambda;\beta,\alpha) \leq 0 \; \forall \; i \in \intset{m}{n}$ with at least some $i\in \intset{m}{n}$ such that $\Theta_{W,i}(\mu-\lambda;\beta,\alpha) < 0$ we arrive at a contradiction.
    \end{proof}

\section{For $\mathbf{H}_{\alpha U}(\mathbb{P}_N)$, the node of $\mathbf{u}(\lambda_2)$ shifts left with increasing $\alpha$.}\label[secinapp]{app:Sturm-Picone}

\begin{theorem}
	Let $\mathbf{H}_{\alpha U}(\mathbb{P}_N)$ be defined as in \Cref{lem:node_left}. Then, the node of $\mathbf{u}(\lambda_2)$ shifts left with increasing $\alpha$.
\end{theorem}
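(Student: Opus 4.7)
The plan is to mimic the argument of \Cref{lem:node_left}, but applied at an arbitrary value $\alpha = \alpha_0 > 0$ rather than only at $\alpha_0 = 0$. Fix $\alpha_0 > 0$ and let $x_0 \in [m_0, m_0+1)$ denote the node of $\mathbf{u}(\lambda_2;\alpha_0)$; by \Cref{lem:node_left} we have $x_0 \leq (N+1)/2$, so the ordering and monotonicity results of \Cref{lem:ordering_2,lem:decreasing_2,cor:ordering} all apply.

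The crux of the argument is the estimate
\begin{equation}
\big\langle \mathbf{U} \big\rangle_{\mathbf{u}(\lambda_2;\alpha_0)} \geq m_0 - 1.
\end{equation}
First I would prove this by pairing vertices symmetrically about the node. In the subcase $x_0 \in [m_0, m_0 + 1/2]$ I pair $(m_0-k,\,m_0+k+1)$ for $k \in \intset{0}{m_0-1}$; in the subcase $x_0 \in (m_0+1/2, m_0+1)$ I pair $(m_0-k,\,m_0+k+2)$ for $k \in \intset{1}{m_0-1}$. In either subcase \Cref{cor:ordering} yields $u_\text{right}^2 \geq u_\text{left}^2$ for each pair, and a short algebraic check shows that each pair contributes at least $(m_0-1)(u_\text{left}^2 + u_\text{right}^2)$ to $\langle \mathbf{U} \rangle = \sum_i (i-1) u_i^2$. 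The unpaired vertices (those immediately adjacent to the node, plus the far-right tail of the path) each carry weight $U_i \geq m_0 - 1$ automatically. Summing and invoking normalization gives the bound.

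Next I would consider the family $\mathbf{H}_{\beta U}(\mathbb{P}_N)$. By \Cref{thm:Hellman-Feynman}, $d\lambda_2/d\beta = \langle \mathbf{U} \rangle_{\mathbf{u}(\lambda_2;\beta)}$, and differentiating \cref{eqn:theta} at $\beta=\alpha_0$ yields
\begin{equation}
\frac{d \Theta_{U,i}\big(\lambda_2(\beta)-\lambda_2(\alpha_0);\beta,\alpha_0\big)}{d\beta}\bigg|_{\beta=\alpha_0} = U_i - \big\langle \mathbf{U}\big\rangle_{\mathbf{u}(\lambda_2;\alpha_0)} \leq (i-1)-(m_0-1),
\end{equation}
which is $\leq 0$ for $i \in \intset{1}{m_0}$ and strictly negative for $i < m_0$. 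Therefore, for $\beta = \alpha_0 + \epsilon$ with $\epsilon > 0$ sufficiently small, $\Theta_{U,i} \leq 0$ on $\intset{1}{m_0}$ with at least one strict inequality. Finally I would apply \Cref{thm:Gantmakher} to $\mathbf{u}(\lambda_2;\beta)$ and $\mathbf{u}(\lambda_2;\alpha_0)$ with adjacent nodes $\eta=0$ and $\xi=x_0$: a node of $\mathbf{u}(\lambda_2;\alpha_0+\epsilon)$ must lie strictly between $0$ and $x_0$. Since $\alpha_0 > 0$ was arbitrary, the node moves strictly left as $\alpha$ increases.

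The main obstacle is the estimate $\langle \mathbf{U} \rangle \geq m_0 - 1$: the pairings of \Cref{cor:ordering} leave the two vertices immediately adjacent to the node unaccounted for, and the analysis must split into two subcases depending on whether $x_0$ lies to the left or right of the midpoint of $[m_0,m_0+1)$. Care is also needed at the extreme case $m_0 = 1$, where strict negativity of some $\Theta_{U,i}$ on $\intset{1}{1}$ must be deduced from the fact that $\langle \mathbf{U}\rangle_{\mathbf{u}(\lambda_2;\alpha_0)} > 0$, which holds because $\mathbf{u}(\lambda_2)$ cannot be concentrated at the first vertex without coinciding with $\mathbf{u}(\lambda_1)$.
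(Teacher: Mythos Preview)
Your proposal is correct and follows essentially the same route as the paper's proof: both hinge on the estimate $\langle \mathbf{U}\rangle_{\mathbf{u}(\lambda_2)} \geq m-1$ (which the paper simply attributes to \Cref{cor:ordering} while you spell out the pairing argument), then differentiate $\Theta_{U,i}$ in $\beta$, observe nonpositivity on $\intset{1}{m}$ with a strict inequality somewhere, and invoke \Cref{thm:Gantmakher} exactly as in \Cref{lem:node_left}. Your handling of the edge case $m_0=1$ via $\langle\mathbf{U}\rangle>0$ is also the same in spirit as the paper's remark that $\mathbf{u}(\lambda_2)$ has both a positive and a negative component.
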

\begin{proof}
	The proof proceeds in analogy to \Cref{lem:node_left}. First, note that by \Cref{cor:ordering},
	\begin{equation}
		\left\langle \mathbf{U} \right\rangle_{\mathbf{u}(\lambda_2)} \geq m-1
	\end{equation}
	where $m$ corresponds to the generalized zero of $\mathbf{u}(\lambda_2)$. Then, like \cref{eq:node_theta}
	\begin{align}
		U_{m} - \left\langle \mathbf{U} \right\rangle_{\mathbf{u}(\lambda_2)} &= (m-1) - \left\langle \mathbf{U} \right\rangle_{\mathbf{u}(\lambda_2)} \leq 0.
	\end{align}
	Note that because $\mathbf{u}(\lambda_2)$ has at least one positive and one negative term, the inequality is strict when $m=1$. If $m>1$,
	\begin{equation}
		U_{m-1} - \left\langle \mathbf{\lambda_2} \right\rangle_{\mathbf{u}(\lambda_2)} < 0.
	\end{equation}
	Thus, by \cref{eq:theta_vary}
	\begin{equation}
	  \frac{d \Theta_{U,i}}{d\beta} <0 \; \; \; \text{$\forall i \in \intset{1}{m}$}.
	\end{equation}
	Hence, by the same logic as \Cref{lem:node_left}, \Cref{thm:Gantmakher} applies and the node always shifts left with increasing $\alpha$.
\end{proof}


\end{document}